\def\BibTeX{{\rm B\kern-.05em{\sc i\kern-.025em b}\kern-.08em
    T\kern-.1667em\lower.7ex\hbox{E}\kern-.125emX}}
\DeclareSymbolFontAlphabet{\mathbbm}{bbold}
\DeclareSymbolFontAlphabet{\mathbb}{AMSb}%
\newtheorem{theorem}{Theorem}[section]
\crefname{theorem}{theorem}{theorems}
\Crefname{Theorem}{Theorem}{Definitions}
\newtheorem{definition}{Definition}[section]
\crefname{definition}{definition}{definitions}
\Crefname{Definition}{Definition}{Definitions}
\crefname{corollary}{corollary}{corollaries}
\Crefname{Corollary}{Corollary}{Corollaries}
\newtheorem{proposition}[theorem]{Proposition}
\crefname{proposition}{proposition}{propositions}
\Crefname{Proposition}{Proposition}{Propositions}
\newtheorem{lemma}[theorem]{Lemma}
\crefname{lemma}{lemma}{lemmas}
\Crefname{Lemma}{Lemma}{Lemmas}
\crefname{example}{example}{examples}
\Crefname{example}{Example}{Examples}
\crefname{remark}{remark}{remarks}
\Crefname{remark}{Remark}{Remarks}
\crefname{assumption}{assumption}{assumptions}
\Crefname{assumption}{Assumption}{Assumptions}
\crefname{enumi}{point}{point}
\Crefname{enumi}{Point}{Point}
\theoremstyle{remark}
\newcommand{\norm}[1]{{
    \left\| #1 \right\|}} 
\newcommand{\osego}[1]{{\left( #1 \right)}} 
\newcommand{\oseg}[1]{{\left( #1 \right]}} 
\newcommand{\iseg}[1]{{\left\llbracket #1 \right\rrbracket}} 
\newcommandx\cspanarg[2][1=]{\ensuremath{\overline{\mathrm{Span}}^{#1}\left(#2\right)}}
\newcommand{\varsqrt}[1]{\left(#1\right)^{1/2}}
\newcommand{\spec}{{\rm \spec}}
\newcommand{\tnorm}[1]{{\left\vert\kern-0.25ex\left\vert\kern-0.25ex\left\vert #1 
    \right\vert\kern-0.25ex\right\vert\kern-0.25ex\right\vert}} 
\def\rset{\mathbb{R}}
\def\zset{\mathbb{Z}}
\def\nset{\mathbb{N}}
\DeclareSymbolFontAlphabet{\mathbbm}{bbold}
\DeclareSymbolFontAlphabet{\mathbb}{AMSb}%
\def\rmi{\mathrm{i}}
\def\rme{\mathrm{e}}
\def\rmd{\mathrm{d}}
\newcommandx{\aslim}[1]{\ensuremath{\stackrel{#1\text{a.s.}}{\longrightarrow}}}  
\newcommand{\1}{\mathbbm{1}}
\def\bA{\mathbf{A}}
\def\ba{\mathbf{a}}
\def\bE{\mathbf{E}}
\def\bw{\mathbf{w}}
\def\bW{\mathbf{W}}
\def\bX{\mathbf{X}}
\def\bY{\mathbf{Y}}
\def\eqsp{\;}
\def\balpha{\boldsymbol{\alpha}}
\def\btheta{\boldsymbol{\theta}}
\def\blambda{\boldsymbol{\lambda}}
\def\blambda{\boldsymbol{\lambda}}
\def\cI{\mathcal{I}}
\newcommand{\pscal}[2]{\left\langle #1, #2 \right\rangle}
\def\bfrho{{\boldsymbol{\rho}}}
\def\b0{{\bf 0}}
\def\ba{{\bf a}}
\def\bw{{\bf w}}
\def\bA{{\bf A}}
\def\bE{{\bf E}}
\def\bW{{\bf W}}
\def\bX{{\bf X}}
\def\bY{{\bf Y}}
\def\cA{\mathcal{A}}
\def\cC{\mathcal{C}}
\def\cI{\mathcal{I}}
\def\cJ{\mathcal{J}}
\def\cN{\mathcal{N}}
\def\cP{\mathcal{P}}
\def\cX{\mathcal{X}}
\newcommand{\argmin}{\mathop{\mathrm{argmin}}}
\newcommand\algo[1]%
\newcommand{\Y}{\ensuremath{Y}}
\newcommand{\chunk}[4][]%
{\ifthenelse{\equal{#1}{}}{\ensuremath{{#2}_{#3:#4}}}{\ensuremath{#2^#1}_{#3:#4}}
}
\def\esp{\mathbb{E}}
\newcommandx\prob[2][1=,2=]{\ensuremath{{\mathbb P}_{#1}^{#2}}}
\newcommand{\PP}[1][]{\ifthenelse{\equal{#1}{}}{\ensuremath{\mathbb{P}}}{\ensuremath{\mathbb{P}\left( #1 \right)}}}
\newcommandx{\PParg}[2][1=]{\PP_{#1}\left(#2\right)}
\newcommand{\PE}[1][]{\ifthenelse{\equal{#1}{}}{\esp}{\ensuremath{{\mathbb E}\left[ #1 \right]}}}
\newcommandx{\PEarg}[2][1=]{\PE_{#1}\left[#2\right]}
\newcommand{\PVar}{\ensuremath{\operatorname{Var}}}
\newcommandx\var[2][1=]{\ensuremath{\PVar_{#1}\left( #2\right)}}
\newcommandx\cvar[3][1=]{\ensuremath{\PVar_{#1}\left( \left. #2 \right| #3 \right)}}
\newcommandx\cov[3][1=]{\ensuremath{\mathrm{Cov}_{#1}\left( #2,#3 \right)}}
\newcommandx\ccov[3][1=]{\ensuremath{\mathrm{Cov}_{#1}\left( \left. #2 \right| #3 \right)}}
\newcommand{\CPP}[3][]
{\ifthenelse{\equal{#1}{}}{\PP\left[\left. #2 \, \right| #3 \right]}{\mathbb{P}_{#1}\left(\left. #2 \, \right | #3 \right)}}
\newcommand{\CPE}[3][]
{\ifthenelse{\equal{#1}{}}{\PE\left[ \left. #2 \right| #3
    \right]}{\mathbb{E}_{#1} \left[ \left. #2 \right| #3 \right]}}
\newcommandx\cprob[4][1=,2=]{\ensuremath{\PP_{#1}^{#2}\left[ \left. #3 \right|
      #4 \right]}}
\newcommandx\HMCP[2][1=]{
\ifthenelse{\equal{#1}{}}{\PP_{#2}}{\PP_{#1,#2}}
}
\newcommandx\HMCE[2][1=]{
\ifthenelse{\equal{#1}{}}{\PE_{#2}}{\PE_{#1,#2}}
}
\newcommandx\HMCEarg[3][1=]{
\ifthenelse{\equal{#1}{}}{\PE_{#2}\left[#3\right]}{\PE_{#1,#2}\left[#3\right]}
}
\def\loikhi2{\mathbf{\chi^2}}
\newcommandx{\proj}[2]{\ensuremath{\operatorname{proj}\left( \left. #1\right|#2\right)}}
\newcommand{\URoot}{\ensuremath{R}}
\newcommand{\UCov}[1][]%
{%
\ifthenelse{\equal{#1}{}}{\URoot \URoot^t}{\URoot_{#1} \URoot^t_{#1}}%
}
\newcommand{\VRoot}{\ensuremath{S}}
\newcommand{\VCov}[1][]%
{%
\ifthenelse{\equal{#1}{}}{\VRoot \VRoot^t}{\VRoot_{#1} \VRoot^t_{#1}}%
}
\newcommand{\LDX}[2]{\ensuremath{L}}
\newcommand{\postdx}[3][]%
{%
\ifthenelse{\equal{#1}{}}{\ensuremath{\psi_{#2|#3}}}{\ensuremath{\psi_{#1,#2|#3}}}%
}
\newcommand{\epostdx}[3][]%
{%
\ifthenelse{\equal{#1}{}}{\ensuremath{\hat{\psi}_{#2|#3}}}{\ensuremath{\hat{\psi}_{#1,#2|#3}}}%
}
\newcommandx{\predx}[3][1=\bX]{#1_{#2|#3}}   
\newcommand{\predpx}[3][]%
{%
\ifthenelse{\equal{#1}{}}{\ensuremath{\varphi_{#2|#3}}}{\ensuremath{\varphi_{#1,#2|#3}}}%
}
\newcommandx\cesp[4][1=,2=]{\ensuremath{{\mathbb E}_{#1}^{#2}\left[ \left. #3 \right| #4 \right]}}
\newcommand{\filt}[2][]%
{%
\ifthenelse{\equal{#1}{}}{\ensuremath{\phi_{#2}}}{\ensuremath{\phi_{#1,#2}}}%
}
\newcommand{\pred}[3][]%
{%
\ifthenelse{\equal{#1}{}}{\ensuremath{\phi_{#2|#3}}}{\ensuremath{\phi_{#1,#2|#3}}}%
}
\newcommand{\post}[3][]%
{%
\ifthenelse{\equal{#1}{}}{\ensuremath{\phi_{#2|#3}}}{\ensuremath{\phi_{#1,#2|#3}}}%
}
\newcommand{\logl}[2][]%
{%
\ifthenelse{\equal{#1}{}}{\ensuremath{\ell_{#2}}}{\ensuremath{\ell_{#1,#2}}}%
}
\newcommand{\lhood}[2][]%
{%
\ifthenelse{\equal{#1}{}}{\ensuremath{\mathrm{L}_{#2}}}{\ensuremath{\mathrm{L}_{#1,#2}}}%
}
\newcommand{\cc}[2][]%
{%
\ifthenelse{\equal{#1}{}}{\ensuremath{c_{#2}}}{\ensuremath{c_{#1,#2}}}%
}
\newcommand{\forvar}[2][]%
{%
\ifthenelse{\equal{#1}{}}{\ensuremath{\alpha_{#2}}}{\ensuremath{\alpha_{#1,#2}}}%
}
\newcommand{\nforvar}[2][]%
{%
\ifthenelse{\equal{#1}{}}{\ensuremath{\bar{\alpha}_{#2}}}{\ensuremath{\bar{\alpha}_{#1,#2}}}%
}
\newcommand{\BK}[2][]%
{%
\ifthenelse{\equal{#1}{}}{\ensuremath{\mathrm{\mathrm{B}}_{#2}}}{\ensuremath{\mathrm{B}_{#1,#2}}}%
}
\newcommand{\filtfunc}[2][]%
{%
\ifthenelse{\equal{#1}{}}{\ensuremath{\tau_{#2}}}{\ensuremath{\tau_{#1,#2}}}%
}
\newcommand{\filtmean}[2][]
{\ifthenelse{\equal{#1}{}}{{\ensuremath{\hat{X}_{#2|#2}}}}{\ensuremath{\hat{X}_{#1,#2|#2}}}
}
\newcommand{\filtcov}[2][]
{\ifthenelse{\equal{#1}{}}{\ensuremath{\Sigma_{#2|#2}}}{\ensuremath{\Sigma_{#1,#2|#2}}}}
\newcommand{\postmean}[3][]
{\ifthenelse{\equal{#1}{}}{\ensuremath{\hat{X}_{#2|#3}}}{\ensuremath{\hat{X}_{#1,#2|#3}}}
}
\newcommand{\postcov}[3][]
{\ifthenelse{\equal{#1}{}}{\ensuremath{\Sigma_{#2|#3}}}{\ensuremath{\Sigma_{#1,#2|#3}}}}
\newcommandx{\QEM}[4][1=,4=]{\ensuremath{\mathcal{Q}_{#1}(#4;#2 \, ; #3)}}
\newcommandx\sequence[3][2=t,3=\zset]{\ensuremath{\left(#1_{#2}\right)_{#2 \in #3 }}}
\newcommandx\dsequence[4][3=t,4=\zset]{\ensuremath{\left( (#1_{#3}, #2_{#3})\right)_{#3 \in #4}}}
\newcommandx{\sequencen}[2][2=n\in\nset]{\ensuremath{\left(#1\right)_{#2}}}
\def\bpm{\left[\begin{matrix}}
\def\epm{\end{matrix}\right]}
\def\bma{\begin{matrix}}
\def\ema{\end{matrix}}
\newcommand{\be}{\begin{equation}}     
\newcommand{\ee}{\end{equation}}        
\newcommand{\eg}{\textit{e.g.}}
\newcommandx\lnorm[3][1=]{\left\lVert #2 \right\rVert^{#1}_{#3}}
\newcommandx\supnorm[2][1=]{| #2 |^{#1}_\infty}
\newcommandx\ball[3][1=]{\mathrm{B}_{#1} (#2,#3)}
\newcommandx{\prohosym}[1][1=]{{\boldsymbol\rho}_{#1}}
\newcommandx{\proho}[3][1=]{\prohosym{#1}\left(#2,#3\right)}
\newcommandx{\pp}[1][1=\mu]{\ensuremath{#1\-\mathrm{a.e.}}}
\renewcommand{\-}{\mbox{-}}
\newcommandx{\as}[1][1=\PP]{\ensuremath{#1\-\mathrm{a.s.}}}
\newcommandx{\oscnorm}[3][1=,3=]{\operatorname{osc}^{#1}_{#3}\left(#2\right)}
\newcommandx{\tvdist}[3][1=]{\ensuremath{d^{#1}_{\mathrm{TV}}}(#2,#3)}
\newcommandx{\VnormFunc}[3][1=]{\ensuremath{\left|#2\right|_{\mathrm{#3}}^{#1}}}
\newcommand{\continuousfunctionset}[1]{\mathrm{C}_b(#1)}
\newcommand{\lipschitzfunctionset}[1]{\mathrm{Lip}(#1)}
\newcommand{\boundedlipschitzfunctionset}[1]{\mathrm{Lip}_b(#1)}
\newcommandx\functionsetarg[2][1=]{
\ifthenelse{\equal{#1}{c}}{\continuousfunctionset{\mathsf{#2}}}
{\ifthenelse{\equal{#1}{bc}}{\mathrm{C}_b(#2)}
{\ifthenelse{\equal{#1}{u}}{\mathrm{U}(#2)}
{\ifthenelse{\equal{#1}{bu}}{\mathrm{U}_b(#2)}
{\ifthenelse{\equal{#1}{l}}{\lipschitzfunctionset{#2}}
{\ifthenelse{\equal{#1}{bl}}{\boundedlipschitzfunctionset{#2}}
{\mathbb{F}_{#1}(#2)}
}}}}}}
\newcommandx\functionsetspec[1][1=]{
\ifthenelse{\equal{#1}{c}}{\mathrm{C}}
{\ifthenelse{\equal{#1}{bc}}{\mathrm{C}_b}
{\ifthenelse{\equal{#1}{u}}{\mathrm{U}}
{\ifthenelse{\equal{#1}{bu}}{\mathrm{U}_b}
{\ifthenelse{\equal{#1}{l}}{\mathrm{Lip}}
{\ifthenelse{\equal{#1}{bl}}{\mathrm{Lip}_b}
{\mathbb{F}_{#1}}
}}}}}}
\newcommandx{\taboo}[3][1=,3=]{\left(\leftidx{_#1}{#2}{}\right){^{#3}}}
\newcommandx\vectornorm[2][1=]{\left| #2 \right|^{#1}}  
\newcommand{\ensemble}[2]{\left\{#1\,:\eqsp #2\right\}}
\newcommand{\set}[2]{\ensemble{#1}{#2}}
\newcommandx{\plim}[1]{\ensuremath{\stackrel{#1\-\text{prob}}{\longrightarrow}}}
\newcommandx{\dlim}[1]{\ensuremath{\stackrel{#1}{\Longrightarrow}}}
\newcommandx\measureset[3][1=\mathrm{s},3=]{\mathbb{M}^{#3}_{#1}(#2)}
\newcommandx\measuresetmetric[2][1=1]{\mathbb{M}_{#1}(\mathcal{B}(\mathsf{#2}))}  
\newcommandx\measuresetspec[1][1=\mathrm{s}]{\mathbb{M}_{#1}}
\newcommand{\abs}[1]{\left\vert #1 \right\vert}
\newcommandx\canonicalkernel[1][1=P]{\mathbb{K}_{#1}}
\def\bi{\mathbf{i}}
\begin{document}

\title{New penalized criteria for smooth non-negative tensor factorization with missing entries}

\author{\IEEEauthorblockN{
    Amaury Durand\IEEEauthorrefmark{1}\IEEEauthorrefmark{2},
    Fran\c{c}ois Roueff\IEEEauthorrefmark{1},
    Jean-Marc Jicquel\IEEEauthorrefmark{2},
    and Nicolas Paul\IEEEauthorrefmark{3}} \\~\\
  \IEEEauthorblockA{\IEEEauthorrefmark{1} LTCI, Telecom Paris, Institut Polytechnique de Paris. 19 Place Marguerite Perey, 91120 Palaiseau, France.} \\
  \IEEEauthorblockA{\IEEEauthorrefmark{2} EDF R\&D, TREE, E36, Lab Les Renardieres. Ecuelles, 77818 Moret sur Loing, France.} \\
  \IEEEauthorblockA{\IEEEauthorrefmark{3}EDF R\&D, PRISME.  6 quai Watier, 78400 Chatou, France. }
\thanks{This work has been submitted to the IEEE for possible publication. Copyright may be transferred without notice, after which this version may no longer be accessible.}
}



\maketitle

\begin{abstract}
  Tensor factorization models are widely used in many applied fields
  such as chemometrics, psychometrics, computer vision or
  communication networks. Real life data collection is often subject to
  errors, resulting in missing data. Here we focus in understanding
  how this issue should be dealt with for non-negative tensor
  factorization. We investigate several criteria used for non-negative
  tensor factorization in the case where some entries are missing. In
  particular we show how smoothness penalties can compensate the
  presence of missing values in order to ensure the existence of an
  optimum. This lead us to propose new criteria with efficient
  numerical optimization algorithms. Numerical experiments are
  conducted to support our claims.
\end{abstract}
 
\begin{IEEEkeywords}
Non-negative tensor decomposition, missing values, Tensor completion, smoothness, PARAFAC, CP decomposition.
\end{IEEEkeywords}

\section{Introduction}
The ever growing literature on multi-way data has shown the
effectiveness of tensor factorization models in many domains ranging
from psychometrics and chemometrics to signal processing and machine
learning (see \cite{KoBa09,Cichocki-NTF,Sidiropoulos17-tensor-sp-ml}
and the references therein). Key strengths of these models include 1)
their flexibility in taking into account prior knowledge on the data
such as non-negativity, sparsity or smoothness and 2) their ability to
cope with missing data.  As a result, they have proven very useful in
real world applications, where factors are related to practical
features.  On the other hand, the theoretical study of tensor
factorization is still a challenging research topic. In particular,
finding the true factorization of a tensor is generally NP-Hard
\cite{Hillar-tensor-NP-Hard} and finding an approximated factorization
requires solving a non-convex optimization problem often dealt with
using an iterative algorithm based on alternating minimization or on
gradient descent. Since the problem is non-convex, few guarantees
exist on the convergence of the optimization method used except that
the objective function decreases at each iteration and that the
algorithm converges to a local minimum. A question which naturally
arises in this context is the existence of a global
minimum. Unfortunately, this existence is not always guaranteed and
discrepancies have been explored in both theoretical and experimental
works especially for the popular CANDECOMP/PARAFAC (CP) decomposition
(see e.g. \cite{Silva08-illposed-tensor,Lim09-NTF} and the references
therein). However, it is common knowledge in the inverse problem and
statistical learning communities, that ill-posed problems can be
handled by controlling the complexity of the solution space.  This can
be done, for example, using regularization
\cite{Tikhonov-ill-posed,Vapnik-statistical-learning-theory}. In the
context of tensor factorization, is has been shown in
\cite{lim2005optimal, Lim09-NTF} that adding non-negativity
constraints ensures existence of a global solution for the CP
decomposition. This decomposition is known as non-negative tensor
factorization (NTF). In this paper, we provide similar guarantees for
the NTF problem in the case where some, possibly many, entries are
missing. This case is of particular interest because missing data are
very common in practical settings where the data collection can be
subject to errors. Another interest of dealing with missing entries is
that it allows to use cross validation methods to select
hyperparameters such as penalty parameters or the tensor rank.

\subsection{Related work}
The literature on tensor factorization from the past decades has given
rise to a wide range of algorithms among which some are adapted to
missing values and/or additional constraints. Methods handling missing
values usually fall into one of the three following categories :
imputation, weighted least squares and probabilistic models. In the
first case, missing entries are estimated at each iteration resulting
in an EM-like algorithm
\cite{Bro97-parafac,ANDERSSON-Nway-toolbox}. In the second case the
squared error is weighted with binary weights representing missing and
observed entries (see
e.g. \cite{TOMASI-parafac-missing-values,ACAR-tf-missing-data}). In
the last case, prior distributions are proposed for the factors and
their parameters are estimated from the observed data
\cite{Xiong10-BPTF,pmlr-v32-rai14,Zhao16-bayesian-tensor}. The problem
of missing entries is also closely related to tensor completion where,
in addition to the factors, one usually also tries to estimate a full
tensor which coincides with the data tensor on observed entries (see
e.g \cite{Song19-tensorcompletion} for a recent survey). Because of
the need for efficient algorithms to deal with large amount of data,
the literature on tensor factorization is dominated by algorithmic
considerations, especially when missing values are taken into
account. 

The effect of using smoothness constraints or penalties has also been
thoroughly explored both for tensor factorization and tensor
completion \cite{TIMMERMAN2002447, Reis02-parafac-spline,
  YOKOTA15-smoothCP, Yokota16-tensor-completion-smooth,
  Li17-low-rank-completion,Imaizumi17-tensor-smoothness,Sadowski18-imagecompletion,durand21-SmoothNTF}. Such
approaches are generally used for numerical purposes to regularize
ill-posed optimization problems and for statistical purposes to
compensate overfitting by incorporating prior knowledge or assumption
on the model. This can be beneficial for the interpretation of the
factors \cite{TIMMERMAN2002447, Reis02-parafac-spline, Henriet19NILM,
  durand21-SmoothNTF} but also for the accuracy of the factorization
or completion. For example, total variation constraints are widely
used to deal with natural images as they are able to capture their
smoothness structure
\cite{gousseau01,Yokota16-smoothCP-tensor-completion,Li17-low-rank-completion}. There
are two main strategies to impose smoothness on the factors. The first
strategy, which is studied in this paper, consists in adding a penalty
term in the loss. Usual penalties for smoothness involve the total
variation norm or the $L^2$ norm of the second derivative for spline
smoothing, \cite{Reis02-parafac-spline,
  Yokota16-smoothCP-tensor-completion,Li17-low-rank-completion,
  Henriet19NILM, durand21-SmoothNTF}. The second strategy consists in
representing the factors within specific lower dimensional spaces,
using splines, polynomials or kernels \cite{TIMMERMAN2002447,
  Reis02-parafac-spline, Zdunek14-splineNMF, YOKOTA15-smoothCP,
  Yokota16-tensor-completion-smooth, Amini17-functional-cp,
  Imaizumi17-tensor-smoothness,Sadowski18-imagecompletion,
  HAUTECOEUR2020256}.

\subsection{Notation} 
The interval of integers between $i$ and $j$ is denoted by
$\iseg{i,j} := \{i, \cdots, j\}$.  We use bold capital letters to
denote tensors and matrices and bold lowercase letters for vectors.
Standard font is used for the entries of the tensors, matrices and
vectors. For example,
$\bA = [\ba_1, \cdots, \ba_R] \in \rset^{I \times R}$ means that
$\ba_r \in \rset^I$ is the $r$-th column of $\bA$ and the $(i,r)$-th
entry of $\bA$ is denoted by $A_{i,r}$ or $a_{i,r}$. A tensor
$\bX \in \rset^{I_1 \times \cdots \times I_N}$ is indexed by a vector
of integers
$\bi = i_{1:n}:=(i_1, \cdots, i_N) \in \prod_{n=1}^N \iseg{1,I_n}$ and
we write $X_\bi = X_{i_1, \cdots, i_n}$. We denote the outer product
between vectors by $\circ$ and the Hadamard product between tensors by
$\oast$. We refer to \cite{KoBa09} for the definitions of these usual
tensor operations.  We recall that the Frobenius scalar product of two
tensors $\bX, \bY \in \rset^{I_1\times\cdots\times I_N}$ is defined as
$\pscal{\bX}{\bY}_F = \sum_{\bi} X_{\bi} Y_{\bi}$ and we denote by
$\norm{\cdot}_F$ its induced norm. The total variation $p$-norm of a
vector $\ba \in \rset^{I}$ is
$\norm{\ba}_{{\rm TV},p} = \left(\sum_{i=1}^{I-1} \abs{a_i -
    a_{i+1}}^p\right)^{1/p}$ for $p \in [1,\infty)$ and
$\norm{\ba}_{{\rm TV}, \infty} = \max_{1 \leq i \leq I-1} \abs{a_i -
  a_{i+1}}$.  For a given norm or semi-norm $\nu$ on $\rset^{I}$ we
respectively denote
$\mathbb{S}_{\nu}^+ := \set{\ba \in \rset_+^{I}}{\nu(\ba) = 1}$ and
$\mathbb{B}^+_{\nu} := \set{\ba \in \rset_+^{I}}{\nu(\ba)\leq 1}$ the
unit positive sphere and the unit positive ball. We also write for any spaces
$\cX_1,\cdots,\cX_N$,
$x^{(1:N)} := (x^{(1)},\cdots,x^{(N)}) \in \prod_{n=1}^N \cX_n$.
Finally, throughout this paper we will denote the $N$ dimensional grid
of indices by $\cI = \prod_{n=1}^N \iseg{1,I_n}$ and, for any tensor
$\bX \in \rset^{I_1,\cdots,I_N}$ and any set $A \subset \rset$, we
denote the set 
of indices at which $\bX$'s entries fall into $A$ by $\{\bX \in A\}$.
Specifically, we denote $\{\bX = x\} = \{\bX \in \{x\}\}$ the set of
indices with entries equal to $x$ or
$\{\bX > x\} = \{\bX \in \osego{x,+\infty}\}$.

The remaining of this paper is organized as follows. In
\Cref{sec:losses}, we describe the tensor factorization problem and
introduce two losses and their related optimization algorithms. The
main theoretical contributions are gathered in
\Cref{sec:global-optimum}. Numerical experiments are conducted in
\Cref{sec:expe-swntf} for comparing the optimization algorithms of
\Cref{sec:losses}. Proofs are postponed to \Cref{sec:proofs}.

\section{Losses and optimization algorithms}\label{sec:losses}
A tensor $\bY \in \rset^{I_1\times \cdots \times I_N}$ is said to admit
a non-negative tensor factorization with rank $R$ if there exists a
sequence of $N$ factor matrices
$\bA^{(1:N)} \in \prod_{n=1}^N\rset_+^{I_1 \times R}$ such that
\begin{equation}\label{eq:ntf-exact}
\bY = \sum_{r=1}^R \ba_r^{(1)} \circ \cdots \circ \ba_r^{(N)} \; ,
\end{equation}
where we recall that $\ba_r^{(n)}$ is the $r$-th column of
$\bA^{(n)}$. In order to avoid scaling indeterminacy in the tensor
factorization, it is common to use an equivalent formulation of
\eqref{eq:ntf-exact} using normalized factors based on $N$ norms
$\nu_1, \cdots, \nu_N$ on $\rset^{I_1},\cdots,\rset^{I_N}$. Let us
set, for all $n\in\iseg{1,N}$ and $r\in\iseg{1,R}$,
$\lambda_r = \prod_{n=1}^n \nu_n(\ba_r^{(n)})$ and
$\tilde{\ba}_r^{(n)} = \frac{\ba_r^{(n)}}{\nu_n(\ba_r^{(n)})}$ if
$\nu_n(\ba_r^{(n)}) > 0$ and any arbitrary element of
$\mathbb{S}_{\nu_n}^+$ otherwise. Then, the factorization
\eqref{eq:ntf-exact} can be written as
\begin{equation}\label{eq:ntf-exact-normalized}
\bY = \sum_{r=1}^R \lambda_r \tilde{\ba}_r^{(1)} \circ \cdots \circ \tilde{\ba}_r^{(N)} \; .
\end{equation}
We study the problem of approximating a tensor
$\bX \in \rset^{I_1\times\cdots\times I_N}$ by a finite rank tensor
using parameterization \eqref{eq:ntf-exact} or
\eqref{eq:ntf-exact-normalized} in the case where some entries of the
tensor are missing. Let $\bW \in \rset_+^{I_1, \cdots, I_N}$ be a
tensor of weights and define
$$
L_\bW(\bA) :=  \norm{\bW \oast \left(\bX - \sum_{r=1}^R \ba_r^{(1)} \circ \cdots \circ \ba_r^{(N)}\right)}_F^2 \; ,
$$
for all $\bA=\bA^{(1:N)} \in \prod_{n=1}^N \rset_+^{I_n \times R}$.
We also define the normalized equivalent, 
$$
\tilde{L}_{\bW}(\blambda,\tilde{\bA}) := \norm{\bW \oast \left(\bX - \sum_{r=1}^R \lambda_r \tilde{\ba}_r^{(1)} \circ \cdots \circ \tilde{\ba}_r^{(N)}\right)}_F^2 \; ,
$$
for all $\blambda\in \rset_+^R$ and $\tilde{\bA}=\tilde{\bA}^{(1:N)}\in \prod_{n=1}^N \left(\mathbb{S}_{\nu_n}^+\right)^R$, where we view $\left(\mathbb{S}_{\nu_n}^+\right)^R$ as the set of matrices in $\rset_+^{I_n\times R}$ whose columns are valued in $\mathbb{S}_{\nu_n}^+$. 
In this framework, it is implicitly assumed that the zero entries of $\bW$ indicate missing entries and $\bW$ is usually taken as a binary tensor.

We also introduce two penalties based on $N$ semi-norms
$\mu_1, \cdots, \mu_N$ defined on $\rset^{I_1}, \cdots, \rset^{I_N}$
respectively. Namely, given two integers $d,p$ with $d \geq p \geq 1$
and $\balpha \in \rset_+^N$, we define, for all $\blambda\in \rset_+^R$ and $\tilde{\bA}=\tilde{\bA}^{(1:N)}\in \prod_{n=1}^N \left(\mathbb{S}_{\nu_n}^+\right)^R$,
  \begin{equation}\label{eq:frscp}
  \tilde{\cP}_{\balpha}(\blambda, \tilde\bA) := \sum_{n=1}^N \alpha_n \sum_{r=1}^R \lambda_r^d \, \mu_n^p(\tilde\ba_r^{(n)}) \; ,
\end{equation}
where we use the notation $\mu_n^p(\ba) =
\left(\mu_n(\ba)\right)^p$. This includes the penalty of
\cite{Yokota16-smoothCP-tensor-completion} by taking $d=2$ and
$\mu_n = \norm{\cdot}_{{\rm TV},p}$ for $p = 1,2$. We propose a new
penalty, defined as the unnormalized equivalent of \eqref{eq:frscp}, which reads as
\begin{equation}\label{eq:frscp-grad}
  \cP_{\balpha}(\bA) := \sum_{n=1}^N \alpha_n \sum_{r=1}^R
 \nu_n^{d-p}(\ba_r^{(n)}) \, \mu_n^p(\ba_r^{(n)})\,\prod_{m\neq n}\nu_m^d(\ba_r^{(m)})\;,
\end{equation}
for all $\bA=\bA^{(1:N)} \in \prod_{n=1}^N \rset_+^{I_n\times R}$.

Then, let $f_{\bW,\balpha} := L_\bW + \cP_{\balpha}$  and $\tilde{f}_{\bW,\balpha} := \tilde{L}_\bW+ \tilde{\cP}_{\balpha}$ and consider the two  following equivalent optimization problems
\begin{equation}\label{eq:wntf-pen}
  \begin{split}
    & \min_{\bA^{(1:N)}} f_{\bW,\balpha}(\bA^{(1:N)}) \\
    &\text{s.t. }  \forall n \in \iseg{1,N}, \, \bA^{(n)} \in \rset_+^{I_n\times R} \; ,
  \end{split}
\end{equation}
and 
\begin{equation}\label{eq:wntf-pen-normalized}
  \begin{split}
    & \min_{\blambda, \tilde\bA^{(1:N)}} \tilde{f}_{\bW,\balpha}(\blambda,\tilde{\bA}^{(1:N)})  \\
    &\text{s.t. } \forall r \in \iseg{1,R},
    \lambda_r \geq 0  \text{ and } \forall n \in \iseg{1,N}, \tilde\ba_r^{(n)} \in \mathbb{S}_{\nu_n}^+
    \; . 
  \end{split}
\end{equation}

The fact that the $\tilde{\ba}_r^{(n)}$'s are constrained individually
in Problem~\eqref{eq:wntf-pen-normalized} naturally leads to the
Hierarchical Alternating Least Squares (HALS) optimization method (see
\eg\ \cite{Cichocki-NTF,Yokota16-smoothCP-tensor-completion}). This
method consists in minimizing $\tilde{f}_{\bW,\balpha}$ alternatively
in the $\tilde{\ba}_{r}^{(n)}$'s and its updates are recalled in
Algorithm~\ref{alg:HALS} where we have defined
$\bX^{(r)} := \bX - \sum_{s \neq r} \lambda_s \tilde{\ba}_s^{(1)}
\circ \cdots \circ \tilde{\ba}_s^{(N)}$ and
$\tilde{f}_{\bW,\balpha}^{(r,n)}(\tilde{\ba}) := \norm{\bW \oast
  (\bX^{(r)} - \lambda_r \tilde{\ba}_r^{(1)} \circ \cdots
  \tilde{\ba}_r^{(n-1)} \circ \tilde{\ba} \circ \tilde{\ba}_r^{(n+1)}
  \circ \cdots \circ \tilde{\ba}_r^{(N)})}_F^2$
$+ \alpha_n \lambda_r^d \mu_n^p(\tilde{\ba})$. 
\begin{algorithm}
  \KwData{$\bX$, $\bW$ and initial values for $\blambda, \tilde{\bA}^{(1)}, \cdots, \tilde{\bA}^{(N)}$}
  $\bE = \bX - \sum_{r=1}^R \lambda_r \tilde{\ba}_r^{(1)} \circ \cdots \circ \tilde{\ba}_r^{(N)}$ \\
  \Repeat{Change of value of $\tilde{f}_{\bW,\alpha}$ is sufficiently small}{
    \For{$r=1, \cdots, R$}{
      $\bX^{(r)} = \bE + \lambda_r \tilde{\ba}_r^{(1)} \circ \cdots \circ \tilde{\ba}_r^{(N)}$ \\
      \For{$n=1, \cdots, N$}{
        $\displaystyle \tilde{\ba}_r^{(n)} = \argmin_{\tilde{\ba} \in \mathbb{S}_{\nu_n}^+} \tilde{f}_{\bW,\balpha}^{(r,n)}(\tilde{\ba})$
      }
      $\displaystyle \lambda_r = \argmin_{\lambda \geq 0} \norm{\bW \oast (\bX^{(r)} - \lambda \tilde{\ba}_r^{(1)} \circ \cdots \circ \tilde{\ba}_r^{(N)})}_F^2 $\\
      $\bE = \bX^{(r)} - \lambda_r \tilde{\ba}_r^{(1)} \circ \cdots \circ \tilde{\ba}_r^{(N)}$ 
    }
  }
  \Return{$\blambda, \tilde{\bA}^{(1)}, \cdots, \tilde{\bA}^{(N)}$}
 \caption{HALS algorithm \label{alg:HALS} }
\end{algorithm}

The update for $\tilde{\ba}_r^{(n)}$ in Algorithm~\ref{alg:HALS}
requires minimizing the function $\tilde{f}_{\bW,\balpha}^{(r,n)}$ on
the non-negative sphere $\mathbb{S}_{\nu_n}^+$. In the case where
there is no non-negativity constraint,
\cite{Yokota16-smoothCP-tensor-completion} proposes to use a projected
gradient method. In our case, we propose to first solve
$\min_{\tilde{\ba} \in \rset_+^{I_n}}
\tilde{f}_{\bW,\balpha}^{(r,n)}(\tilde{\ba})$ using a gradient-based
method with bound constraints such as L-BFGS-B and then normalize the
result by its $\nu_n$-norm. The update in $\lambda_r$ writes as
$\lambda_r = \left[\frac{\pscal{\bW \oast \bX^{(r)}}{\bW \oast
      (\tilde\ba_r^{(1)} \circ \cdots \circ
      \tilde\ba_r^{(N)})}_F}{\norm{\bW \oast (\tilde\ba_r^{(1)} \circ
      \cdots \circ \tilde\ba_r^{(N)})}_F^2}\right]_+$.

On the other hand, the constraints in Problem~\eqref{eq:wntf-pen}
reduce to a bound constraint on a vectorization of $\bA$. Hence, in
the case where the gradients of $\mu_n$ and $\nu_n$ are available, we
can follow the approach of \cite{ACAR-tf-missing-data} and use a
gradient-based method with bound constraints such as L-BFGS-B.

In the next section, we study the existence of a global optimum for
Problem~\eqref{eq:wntf-pen-normalized}. The existence of a global
optimum for Problem~\eqref{eq:wntf-pen} follows immediately as they
are two different parameterizations of the same criterion. However,
using the normalized version allows for stronger results, such as coercivity,
which are not possible in the unnormalized version as already noted in \cite{Lim09-NTF}.

\section{Existence of a global optimum for the weighted NTF}\label{sec:global-optimum}
\label{sec:theoretical-results}
In this section, we consider the normalized version of the weighed NTF
problem, i.e. Problem~\eqref{eq:wntf-pen-normalized}, where, for all
$n\in\iseg{1,N}$, $\nu_n$ is a norm on $\rset^{I_n}$ and $\mu_n$ is a
semi-norm on $\rset^{I_n}$. To simplify the presentation, we set
$\Theta := \rset_+^R \times \prod_{n=1}^N
\left(\mathbb{S}_{\nu_n}^+\right)^R$ and
$\btheta = (\blambda, \bA^{(1:N)}) \in \Theta$.  We investigate the
existence of a minimum with minimal assumptions on the semi-norms used
in the penalty term. Namely, for $n=1,\dots,N$, we define
\begin{align}
  \label{eq:inf-mu-zero-def}
& \psi_n :=  \sup\set{\nu_n(\ba)}{\ba \in \mathbb{B}^+_{\mu_n}} \;,\\
  \label{eq:continuity-norms-def}
& \psi_n^{\prime} := \sup\set{\nu_n(\ba)}{\ba \in
                                    \mathbb{B}^+_{\mu_n},\exists i\in\iseg{1,I_n}\,,a_i=0}\;,
\end{align}
and consider the following admissibility condition for the semi-norm
$\mu_n$.
\begin{enumerate}[label=(\textbf{AC})]
\item\label{item:assumption-seminorms} We have
  $\psi_n^{\prime}<\psi_n=\infty$,
\end{enumerate} 
Condition $\psi_n=\infty$ simply says that $\mu_n$ is
not a norm on the positive cone (otherwise it would be equivalent to the norm $\nu_n$). The
condition $\psi_n^{\prime}<\infty$ says that $\mu_n$ behaves as a norm
on the positive sub-cone that have at least one zero entry.
Condition~\ref{item:assumption-seminorms} holds for semi-norms typically used in smoothness
penalties, as shown by the following lemma.
\begin{lemma}\label{lem:smoothness-continuity-norms}
 Condition~\ref{item:assumption-seminorms} holds for $\mu_n$ defined as one of the following semi-norms.
  \begin{enumerate}[label=(\roman*)]
  \item\label{itm:tv-p} For all $\ba \in \rset^{I_n}$, $\mu_n(\ba) = \norm{\ba}_{{\rm TV},p}$ for some  $p\in [1,+\infty]$.
  \item\label{itm:spline} For all $\ba \in \rset^{I_n}$, $\mu_n(\ba) = \left(\int_0^1 (a''(u))^2 \, \rmd u\right)^{1/2}$, where $a : [0,1] \to \rset$ is the natural cubic spline such that, for all $i\in\iseg{1,I_n}$, $a(u_i) = a_i$ for some $0 < u_1 \leq \cdots \leq u_{I_n} < 1$.  
  \end{enumerate}
\end{lemma}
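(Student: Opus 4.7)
The plan is to verify condition~\ref{item:assumption-seminorms} separately for each semi-norm by establishing its two parts: first that $\psi_n=\infty$, by exhibiting non-negative elements in the null space of $\mu_n$ along which $\nu_n$ is unbounded; second that $\psi_n'<\infty$, by bounding $\nu_n(\ba)$ in terms of $\mu_n(\ba)$ over non-negative $\ba$ with at least one vanishing entry. Since any two norms on the finite-dimensional space $\rset^{I_n}$ are equivalent, it suffices throughout to work with the sup-norm $\norm{\ba}_\infty$ as a proxy for $\nu_n(\ba)$ and then invoke equivalence to conclude.

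For case~\ref{itm:tv-p}, the null space of $\norm{\cdot}_{{\rm TV},p}$ intersected with $\rset_+^{I_n}$ is the half-line of non-negative constant vectors, so that $\ba=c\mathbf{1}$ with $c>0$ lies in $\mathbb{B}^+_{\mu_n}$ with $\nu_n(\ba)=c\,\nu_n(\mathbf{1})$ arbitrarily large, yielding $\psi_n=\infty$. For the finiteness of $\psi_n'$, suppose $\ba\in\rset_+^{I_n}$ has $a_{i_0}=0$ for some index $i_0$; for any $j$ a telescoping identity combined with H\"older's inequality gives
\begin{equation}
|a_j| \,=\, |a_j-a_{i_0}| \,\le\, \sum_k |a_{k+1}-a_k| \,\le\, (I_n-1)^{1-1/p}\,\norm{\ba}_{{\rm TV},p}\,,
\end{equation}
for $p\in[1,\infty)$, with the obvious analog for $p=\infty$. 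Maximizing in $j$ and using norm equivalence on $\rset^{I_n}$ yields $\nu_n(\ba)\le C\,\mu_n(\ba)$ on the sub-cone of interest, hence $\psi_n'\le C<\infty$.

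For case~\ref{itm:spline}, the same strategy applies in principle. The null space of $\ba\mapsto\norm{a''}_{L^2}$ consists of those $\ba$ whose natural cubic spline interpolant is affine on $[0,1]$, which contains the half-line of non-negative constants; rescaling such a constant yields $\psi_n=\infty$. For the bound $\psi_n'<\infty$, the proof should proceed via a Poincar\'e--Wirtinger-type inequality for the Sobolev space $H^2(0,1)$ controlling $\norm{a}_\infty$ by $\norm{a''}_{L^2}$ modulo the null space, and then transferring the estimate to $\rset^{I_n}$ through the continuity of the interpolation map $\ba\mapsto a$. The main obstacle, and the step I would concentrate on, is that the null space of $\mu_n$ on $\rset^{I_n}$ is two-dimensional (all affine functions), so the single constraint $a_{i_0}=0$ does not by itself reduce to a trivial null direction; one must exploit the non-negativity constraints $a_j\ge 0$ jointly with the ordering of the $u_j$'s to eliminate the remaining degree of freedom in the affine null space before the Poincar\'e-type estimate produces the desired bound $\nu_n(\ba)\le C\,\mu_n(\ba)$.
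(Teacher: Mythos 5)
Your verification of $\psi_n=\infty$ (scale a non-negative, non-zero vector in the null space of $\mu_n$) and your treatment of case~\ref{itm:tv-p} are correct and essentially the paper's own argument: the paper reduces to $p=1$ and telescopes, which is the same estimate as your telescoping-plus-H\"older bound, up to the explicit constant $(I_n-1)^{1-1/p}$.

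Case~\ref{itm:spline} is where the real content of the lemma lies, and there your proposal has a genuine gap: the key bound $\psi_n'<\infty$ is only announced, and the route you sketch for closing it cannot be carried out as described. You propose to eliminate the affine part of the null space of $\ba\mapsto\norm{a''}_{L^2}$ by combining the single constraint $a_{i_0}=0$ with non-negativity and the ordering of the $u_j$'s, and then to invoke a Poincar\'e--Wirtinger inequality in $H^2(0,1)$. But take $a_i=\beta\,(u_i-u_1)$ with $\beta>0$ (assuming the $u_i$'s are not all equal): the entries are non-negative, the first one vanishes, and the natural cubic spline interpolating these data is the affine function $u\mapsto\beta(u-u_1)$, so $\mu_n(\ba)=0$ while $\nu_n(\ba)$ grows without bound in $\beta$. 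Hence non-negativity together with one vanishing entry does not remove the non-constant affine direction, and no inequality of the form $\nu_n(\ba)\leq C\,\mu_n(\ba)$ is available on the sub-cone defining $\psi_n'$; the step you defer is the crux, not a detail, and the announced strategy dead-ends on this family. The paper's proof takes a different route altogether: it expands the spline $a$ in Fourier coefficients $\hat a_k$, asserts $\mu_n(\ba)=\left(\sum_{k}(2\pi k)^2\abs{\hat a_k}^2\right)^{1/2}\leq 1$, deduces $\sum_{k\neq 0}\abs{\hat a_k}\leq 1/(2\sqrt{3})$ by Cauchy--Schwarz, and then bounds $\norm{\ba}_\infty=\norm{\ba-a_j}_\infty\leq 2\sum_{k\neq 0}\abs{\hat a_k}$, concluding $\psi_n'\leq 1/\sqrt{3}$. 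Observe that the asserted relation between $\mu_n(\ba)$ and the $\hat a_k$'s is a periodic-type identity that fails precisely for non-constant affine splines, i.e.\ on the very family above; so your diagnosis of where the difficulty sits is sound and touches a real issue, but as submitted your proposal does not prove case~\ref{itm:spline}, and it cannot be completed along the lines you indicate without a substantively different argument (such as the paper's Fourier computation, or a reformulation of the spline seminorm that excludes the affine direction).
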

In the next theorem, we provide a necessary
and sufficient condition for $\tilde{f}_{\bW,\balpha}$ to be
\emph{coercive} on $\Theta$, which means that
$\tilde{f}_{\bW,\balpha}(\theta)$ tends to $+\infty$ as the norm of
$\theta$ goes to $+\infty$.

The necessary and sufficient condition relies on the
following definition of cylinders.
\begin{definition}
  For any non-empty subset $\cN \subset \iseg{1,N}$, a $\cN$-cylinder
  is a set defined as 
  \begin{equation}\label{eq:cylinder}
  \cC((j_n)_{n\in\cN}) := \set{\bi\in\cI}{i_n=j_n, \; \forall n \in\cN} \; ,
\end{equation}
for some $(j_n)_{n\in\cN} \in \prod_{n\in\cN}\iseg{1,I_n}$. In
particular, the whole set $\cI$ the unique $\emptyset$-cylinder.
\end{definition}
Then the following result holds.
\begin{theorem}\label{thm:coercivity-penalized}
  Let $\balpha \in \rset_+^N$ and suppose that, for all $n\in\{\balpha > 0\}$, $\mu_n$ satisfies~\ref{item:assumption-seminorms}. 
  Then the two following assertions are equivalent.  
  \begin{enumerate}[label=(\roman*)]
  \item\label{itm:no-missing-cylinder} The set $\{\bW = 0\}$ contains no $\{\balpha = 0\}$-cylinder.
  \item\label{itm:frscp-compact-level-set} The function
    $\tilde{f}_{\bW,\balpha}$ is coercive on $\Theta$. 
  \end{enumerate}
  In this case, both $\tilde{f}_{\bW,\balpha}$ and $f_{\bW,\balpha}$
  admit global minima on
  $\Theta$ and on
  $\prod_{n=1}^N \rset_+^{I_n \times R}$, respectively.
\end{theorem}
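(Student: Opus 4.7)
My plan is to work with problem~\eqref{eq:wntf-pen-normalized} and transfer the conclusion back to~\eqref{eq:wntf-pen}. Given $\bA^{(1:N)}\in\prod_n\rset_+^{I_n\times R}$, setting $\lambda_r=\prod_n\nu_n(\ba_r^{(n)})$ and $\tilde\ba_r^{(n)}=\ba_r^{(n)}/\nu_n(\ba_r^{(n)})$ (with an arbitrary choice in $\mathbb{S}_{\nu_n}^+$ when the norm vanishes) yields $L_\bW(\bA)=\tilde L_\bW(\blambda,\tilde\bA)$, and by homogeneity of both $\nu_n$ and $\mu_n$, $\cP_\balpha(\bA)=\tilde\cP_\balpha(\blambda,\tilde\bA)$. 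Conversely every $(\blambda,\tilde\bA)\in\Theta$ is realised by $\ba_r^{(n)}:=\lambda_r^{1/N}\tilde\ba_r^{(n)}$. Hence the two infima coincide and minimisers transfer. Since $\tilde f_{\bW,\balpha}$ is continuous and $\Theta$ is closed in its ambient finite-dimensional space, coercivity will be enough to guarantee the existence of a global minimiser via compactness of sub-level sets.

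\textbf{The easy direction $\neg\text{(i)}\Rightarrow\neg\text{(ii)}$.}
Assume $\cC((j_n)_{n\in\cN})\subset\{\bW=0\}$ for some $\cN=\{\balpha=0\}$-cylinder. For $n\in\cN$ I pick $\tilde\ba^{(n)}\in\mathbb{S}_{\nu_n}^+$ supported only at $j_n$. For $n\in\cN^c$, condition~\ref{item:assumption-seminorms} supplies $\psi_n=\infty$, so there exist $\ba_k\in\mathbb{B}^+_{\mu_n}$ with $\nu_n(\ba_k)\to\infty$; the renormalised vectors $\ba_k/\nu_n(\ba_k)\in\mathbb{S}_{\nu_n}^+$ have $\mu_n$-value tending to $0$ and, by compactness of the unit sphere, converge along a subsequence to some $\tilde\ba^{(n)}\in\mathbb{S}_{\nu_n}^+$ with $\mu_n(\tilde\ba^{(n)})=0$. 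Using only the first rank component and $\blambda=(\lambda,0,\dots,0)$, the tensor $\lambda\,\tilde\ba^{(1)}\circ\cdots\circ\tilde\ba^{(N)}$ has support contained in $\cC((j_n)_{n\in\cN})\subset\{\bW=0\}$, so $\tilde L_\bW=\|\bW\oast\bX\|_F^2$ is constant in $\lambda$ while $\tilde\cP_\balpha=0$ (for $n\in\cN$ the weight $\alpha_n$ vanishes, for $n\in\cN^c$ the semi-norm vanishes). Letting $\lambda\to+\infty$ refutes coercivity.

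\textbf{The hard direction $\text{(i)}\Rightarrow\text{(ii)}$ and the main obstacle.}
Take $\theta_k=(\blambda_k,\tilde\bA_k^{(1:N)})\in\Theta$ with $\|\blambda_k\|\to\infty$. Using compactness of each $\mathbb{S}_{\nu_n}^+$, I extract a subsequence along which $\tilde\ba_{k,r}^{(n)}\to\tilde\ba_{\infty,r}^{(n)}$ for every $(r,n)$ and each $\lambda_{k,r}$ either diverges or converges; let $\cR\neq\emptyset$ be the set of diverging indices. If some $r\in\cR$ and $n\in\cN^c$ satisfy $\mu_n(\tilde\ba_{\infty,r}^{(n)})>0$, then by continuity of $\mu_n$ the penalty contribution $\alpha_n\lambda_{k,r}^d\,\mu_n^p(\tilde\ba_{k,r}^{(n)})$ already diverges, and we are done. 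The hard case is $\mu_n(\tilde\ba_{\infty,r}^{(n)})=0$ for all such $(r,n)$. The key leverage of~\ref{item:assumption-seminorms} here is that any non-zero $\ba\in\rset_+^{I_n}$ with $\mu_n(\ba)=0$ has strictly positive coordinates: if $a_i=0$ then $\ba/\varepsilon\in\mathbb{B}^+_{\mu_n}$ also has a zero entry for every $\varepsilon>0$, forcing $\nu_n(\ba)/\varepsilon\leq\psi_n^{\prime}$ and hence $\ba=0$. Fixing any $r\in\cR$, this gives $\tilde a_{\infty,r,i_n}^{(n)}>0$ for every $i_n$ and every $n\in\cN^c$, so the support of the limit rank-one tensor contains the cylinder $\cC((j_n)_{n\in\cN})$ for any choice of $j_n$ in the non-empty support of $\tilde\ba_{\infty,r}^{(n)}$ for $n\in\cN$. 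By hypothesis~(i) there exists $\bi^*\in\cC((j_n)_{n\in\cN})$ with $W_{\bi^*}>0$, and non-negativity of every factor rules out cancellation, yielding
\[
\sum_{s=1}^R\lambda_{k,s}\prod_{n=1}^N\tilde a_{k,s,i_n^*}^{(n)}\;\geq\;\lambda_{k,r}\prod_{n=1}^N\tilde a_{k,r,i_n^*}^{(n)}\longrightarrow+\infty,
\]
so the squared residual at the single index $\bi^*$ forces $\tilde L_\bW(\theta_k)\to+\infty$. The main obstacle is precisely this case: without non-negativity, rank components could cancel at $\bi^*$; and without~\ref{item:assumption-seminorms}, the limit vector might vanish at $\bi^*$, either of which would break the argument. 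Once coercivity is in place, compactness of the sub-level sets of the continuous function $\tilde f_{\bW,\balpha}$ on the closed set $\Theta$ yields a global minimiser, which the first paragraph transports back to $f_{\bW,\balpha}$ on $\prod_n\rset_+^{I_n\times R}$.
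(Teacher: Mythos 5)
Your proof is correct, but it follows a genuinely different route from the paper's. The paper reduces coercivity to the strict positivity of $\eta:=\inf\{h_\bW(\ba^{(1:N)})+\sum_n\alpha_n\mu_n^p(\ba^{(n)})\}$ over $\prod_n\mathbb{S}_{\nu_n}^+$ via the explicit bound $\tilde f_{\bW,\balpha}(\btheta)\geq\bigl[\norm{\blambda}_\infty^{d\wedge2}-1\bigr]_+\eta$, and then proves $\eta>0$ (and the converse) through a quantitative apparatus: restricted coercivity (\Cref{prop:wsntf-restricted}) and an exact identification of the critical constraint level $C_\bW(\bfrho)=C(\bfrho,\{\bW>0\})$ built on the combinatorial family $\mathscr{J}$ (\Cref{lem:ineq-mpq,lem:condition-C-notnull,lem:def-C-with-m}). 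You instead argue by sequential compactness on the product of spheres, isolating as the key consequence of~\ref{item:assumption-seminorms} (via $\psi_n'<\infty$) that a nonzero nonnegative vector with vanishing $\mu_n$ must have strictly positive entries; for a diverging rank component either a penalty term blows up, or the limiting rank-one support contains an $\{\balpha=0\}$-cylinder, which by~\ref{itm:no-missing-cylinder} meets $\{\bW>0\}$, and non-negativity forbids cancellation so the data-fit term at that single observed index diverges. Your converse is also cleaner than the paper's: you exhibit a ray along which both the weighted loss increment and the penalty vanish identically, rather than an asymptotic construction from $C_\bW(\bfrho)=0$. What each approach buys: yours is shorter, more elementary, and makes transparent exactly where non-negativity and~\ref{item:assumption-seminorms} are used; the paper's yields quantitative byproducts (an explicit coercivity lower bound, the exact threshold $C_\bW(\bfrho)$, and the reusable restricted-set proposition) that a purely qualitative argument does not provide. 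One routine point to tidy: you show $\tilde f_{\bW,\balpha}\to+\infty$ only along an extracted subsequence; to conclude coercivity, apply the same extraction to an arbitrary subsequence (or start from a hypothetical sequence with $\norm{\btheta_k}\to\infty$ and $\tilde f_{\bW,\balpha}(\btheta_k)$ bounded and derive a contradiction) — this is standard and not a gap.
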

We assume that $\mu_n$ satisfy \ref{item:assumption-seminorms} only
for $n\in\{\balpha>0\}$ because, for $n\in\{\balpha=0\}$, $\mu_n$
vanishes in the penalty~(\ref{eq:frscp}).  Also note that
Assertion~\ref{itm:no-missing-cylinder} means that no
$\{\balpha = 0\}$-cylinder is missing and is not very restrictive. For
example, if all modes are penalized, i.e.
$\{\balpha = 0\} = \emptyset$, then
Assertion~\ref{itm:no-missing-cylinder} does not hold if and only if
all entries are missing. In the experiments, we study the case of
color image completion where $N=3$ and the first two modes correspond
to pixels and the third corresponds to the color channel. In this
case, we penalize only the first two modes, i.e.
$\{\balpha = 0\} = \{3\}$. This means that
Assertion~\ref{itm:no-missing-cylinder} does not hold if and only
if an entire color channel is missing.

\section{Experimental results}\label{sec:expe-swntf}
In this section, we compare the two optimization problems
\eqref{eq:wntf-pen} and \eqref{eq:wntf-pen-normalized} with the
penalties defined in \eqref{eq:frscp} and \eqref{eq:frscp-grad}
respectively where we take $p=d=2$. The first optimization problem is
solved using the \texttt{scipy} implementation of L-BFSG-B. The second
optimization problem is solved using Algorithm~\ref{alg:HALS}. We
propose two experiments. In the first one, we try to recover factors
from an incomplete and noisy observation of a tensor of the type
\eqref{eq:ntf-exact-normalized} with $N=3$. In the second one, we
apply the algorithms for color image completion.  In both experiments
and each of the algorithm, we take $\balpha = [\alpha,\alpha,0]^\top$
where $\alpha > 0$, and $\nu_n = \norm{\cdot}_2$ for all $n$. The
maximum number of iterations is set to $10^4$ and the iterations are
stopped if the relative improvement of the loss is lower than
$10^{-6}$. One iteration of the HALS algorithm consists of the steps
in the {\bf repeat}--{\bf until} box of Algorithm~\ref{alg:HALS} and
one iteration of the L-BFSG-B method consists in one update of the
gradient descent. In this case, during one iteration, all the factors
are updated. The computational performances algorithms are compared
using the average computing time per iteration denoted by TPI. All the
experiments are run on a Linux Workstation with 40 Intel Xeon E5-2630
v4 2.20 GHz processors.

\subsection{Factor estimation on toy data}
In this experiment, we construct a tensor
$\bX = \bY + \sigma \bE \in \rset^{I \times I \times I}$ where $\bY$
is as in \eqref{eq:ntf-exact-normalized} with $I = 50$, $N=3$ and
$R = 5$ and $\bE$ has a standard normal entries. The factors
$\ba_r^{(3)}$ are sampled uniformly on $[0,1]$ and the factors
$\ba_r^{(1)}$ and $\ba_r^{(2)}$ are constructed by taking non-negative
random linear combination of $7$ B-Spline functions of order $4$. With
this construction, the factors are non-negative and smooth for $n=1,2$
and we allow the $\ba_r^{(2)}$'s to vanish on some intervals (see the
first column of \Cref{fig:toy}). The standard deviation $\sigma > 0$
is computed as in \cite{TOMASI-parafac-missing-values,
  ACAR-tf-missing-data}, i.e.
$\sigma = (100 / \nu - 1)^{-1/2} \frac{\norm{\bY}_F}{\norm{\bE}_F}$,
and we take $\nu = 10$. Since the true factors are generated from
B-splines, we take $\mu_n$ as in \ref{itm:spline} of
\Cref{lem:smoothness-continuity-norms}. We generated $25\%, 50\%$ and
$70\%$ percent of missing data which are drawn randomly and uniformly
on the grid $\iseg{1,I}^3$. For the unnormalized problem, we use an
SVD-based initialization and use its normalized equivalent for the
normalized problem. To evaluate the output
$\hat{\bY} = \sum_{r=1}^R \hat\lambda_r \hat\ba_r^{(1)} \circ
\hat\ba_r^{(2)} \circ \hat\ba_r^{(3)}$, we use the normalized mean
square error, NMSE
$=\frac{\norm{\bY - \hat{\bY}}_F^2}{\norm{\bY}_F^2}$ and the
similarity score, SIM
$=\max_{\sigma \in \mathscr{S}_R} \frac{1}{R}\sum_{r=1}^R
\prod_{n=1}^N \pscal{\tilde{\ba}_r^{(n)}}{\hat\ba_{\sigma(r)}^{(n)}}$,
where $\mathscr{S}_R$ denotes the set of permutations of
$\iseg{1,R}$. Note that, since the $\tilde{\ba}_r^{(n)}$'s and
$\hat{\ba}_r^{(n)}$'s have unit $2$-norm, we have SIM $\in [0,1]$. A
lower NSME is interpreted as a better prediction of $\Y$'s entries
while a higher SIM is interpreted as a better estimation of its
factors. An example of reconstruction is shown in \Cref{fig:toy} where
we observe that the gradient method is able to reconstruct the factors
even when the proportion of missing data is high. In the remaining of
this section, we discuss the choice of $\alpha$ and compare the
methods with various values of $I$.

\subsubsection{Data-driven selection of $\alpha$}
We propose to evaluate the selection of $\alpha$ using a
$5$-fold cross validation criterion (CV), which amounts to
arbitrarily introduce additional missing values in the objective
function and evaluate their prediction errors. More precisely, we
generate randomly $5$ binary masks $(\bW_k)_{k\in \iseg{1,5}}$ such
that the sets $\left(\{\bW_k > 0\}\right)_{k\in\iseg{1,5}}$ create a
partition of $\cI$. For each fold $k\in\iseg{1,5}$ and each value of
$\alpha \in \{0, 10^{-4}, 10^{-3}, 10^{-2}, 10^{-1}, 1, 10\}$, we
minimize $f_{\bW \oast \bW_k,\balpha}$ (or
$\tilde{f}_{\bW \oast \bW_k,\balpha}$) and the cross validation score
is computed using $f_{\bW \oast ({\bf 1} - \bW_k),\balpha}$ (or
$\tilde{f}_{\bW \oast ({\bf 1} - \bW_k),\balpha}$), where ${\bf 1}$ is
the tensor with all entries equal to $1$. We compare the cross
validation scores with an oracle selection of $\alpha$ given by the
NMSE and SIM scores which use the ground truth. The results are
gathered in \Cref{fig:alpha} where we observe that the selection with
the cross validation score seems consistent with the oracle based on
NMSE, which indicates that CV is a suitable parameter selection method
in our context. Note that the oracle selection using the SIM score can
give different optimal values for $\alpha$. This can be explained by
the fact that the SIM score is more sensitive to small differences
between the estimated factors and the true factors.

\begin{figure}[h]
  \centering
  \includegraphics[width=0.5\textwidth]{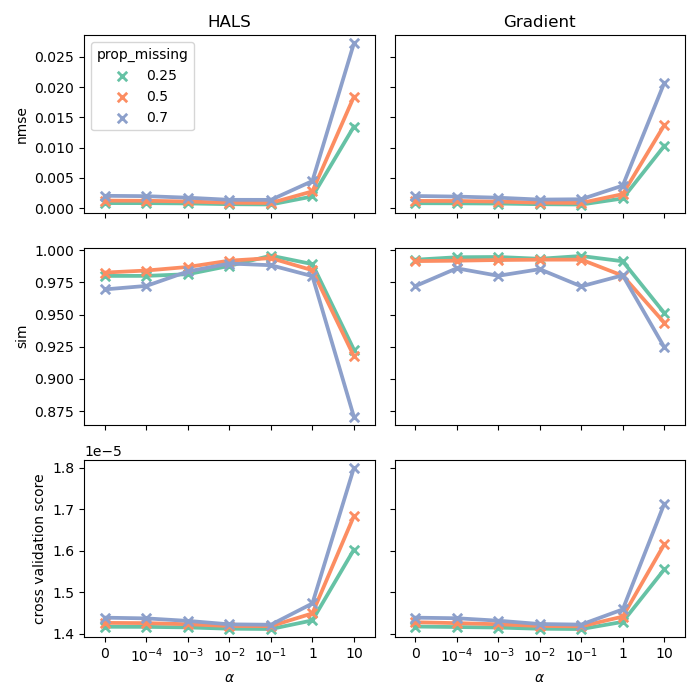}
  \caption{CV selection of $\alpha$ (bottom), compared to the SIM (middle) and
    NMSE (top) critera. \label{fig:alpha}}
\end{figure}
\subsubsection{Comparison with various dimensions}
We now compare the proposed methods for various values of $I$. We use
the same ground truth represented by the first column of
\Cref{fig:toy} where we interpolated the factors to achieve higher
values of $I$. The comparison is made in a best case scenario where,
for each value of $I$, each proportion of missing data and each model,
we use the oracle selection of
$\alpha \in \{0, 10^{-4}, 10^{-3}, 10^{-2}, 10^{-1}, 1, 10\}$ based on
the NMSE score. The NMSE and SIM scores and the TPI are gathered in
\Cref{fig:scores}. The two methods give very similar results for the
NMSE and SIM whose values indicate almost perfect reconstruction in
all cases. We observe that the reconstruction tends to be better for
large values of $I$ which is expected since the difficulty of the
problem decreases as $I$ increases (see
\cite{ACAR-tf-missing-data}). The advantage of the gradient-based
method is, however, highlighted by the TPI, especially when the
dimension increases. Note that we observed that the value of $\alpha$
does not affect much the TPI and therefore the comparison displayed in
\Cref{fig:scores} is representative of any value of $\alpha$.
\begin{figure}[h]
  \centering
  \includegraphics[width=0.5\textwidth]{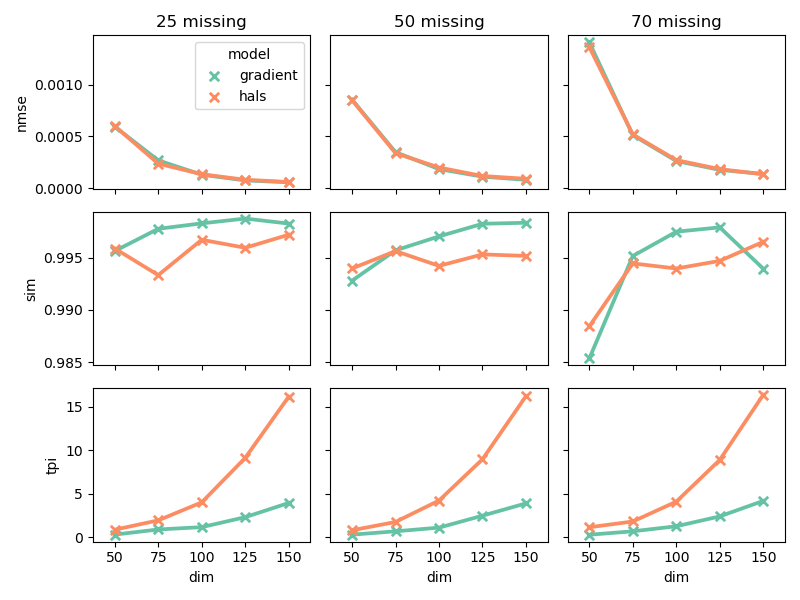}
  \caption{Best case scenario for different dimensions. \label{fig:scores}}
\end{figure}

\begin{figure*}[h]
  \centering
  \includegraphics[width=\textwidth]{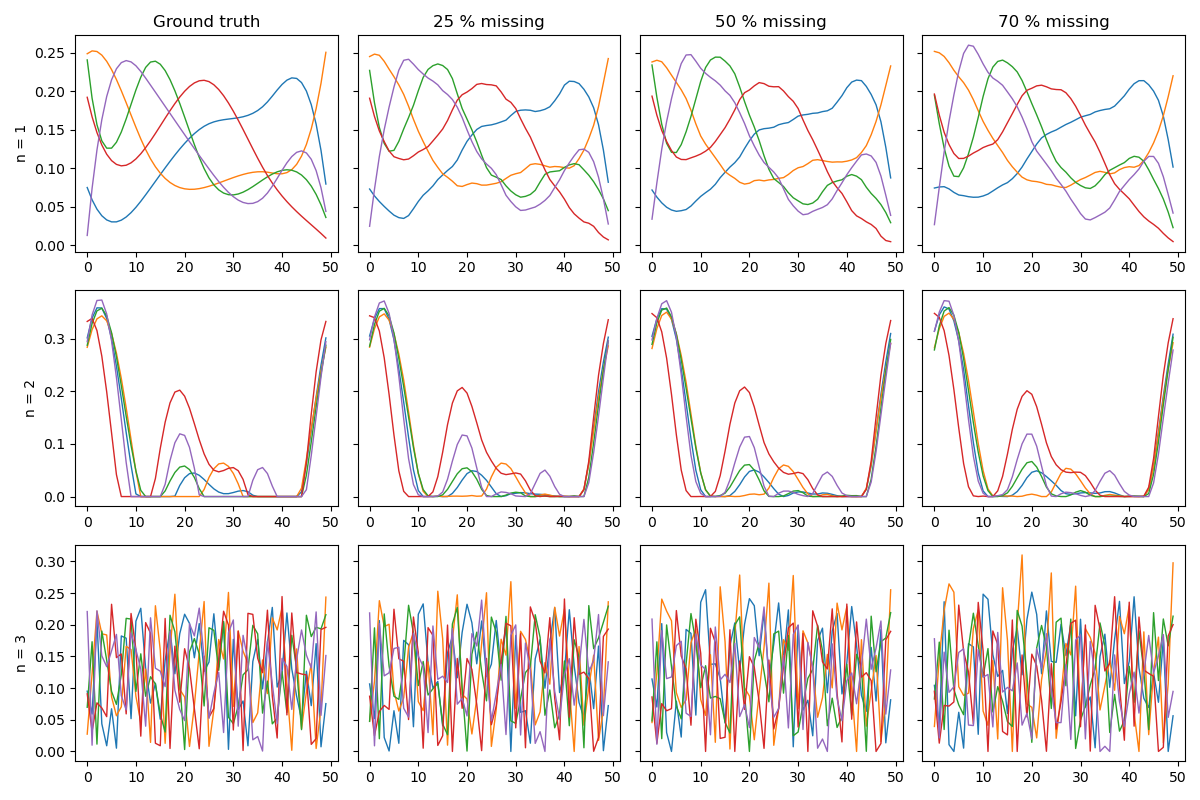}
  \caption{Estimated factors with gradient-based method and $\alpha = 0.1$.  Rows represent the modes (i.e. $n$) and colors represent the components (i.e. $r$). \label{fig:toy}}
\end{figure*}

\subsection{Color image completion}
In this experiment, we apply both methods to $3$ color images
(barbara, baboon and giant) of size $256 \times 256 \times 3$. Missing
data are generated in two ways. In the first case, we remove all the
color channels for $80\%$ of the pixels selected randomly and
uniformly.  In the last case, we remove all the color channels for a
mask obtained by scribbling the image. Note that, in all of these
cases, Assertion~\ref{itm:no-missing-cylinder} of
\Cref{thm:coercivity-penalized} holds. We use the quadratic variation
penalty as in \cite{Yokota16-smoothCP-tensor-completion}, i.e.
$\mu_n = \norm{\cdot}_{{\rm TV},2}$. The quality of completion, is
evaluated by the peak signal-to-noise ratio (PSNR) and structural
similarity index (SSIM) as in
\cite{Yokota16-smoothCP-tensor-completion}. High PSNR and SSIM
indicate a good completion. For both methods, we take a fix rank
$R = 50$ and use a random initialization. For each image, each mask and each model,
we use the oracle selection of $\alpha$ based on the SSIM. As observed in
\Cref{fig:image-alpha}, there is an optimal value of $\alpha$ which
should neither be too small nor too large. This value is not
necessarily the same for both algorithms. We also observe that, near
the optimal value of $\alpha$, the gradient-based method gives a
better SSIM score than the HALS method. This is also highlighted by
the completed images displayed in \Cref{fig:image}. Note also that the
gradient-based method has much lower TPI than the HALS one.
\begin{figure}[!h]
  \centering
  \includegraphics[width=0.49\textwidth]{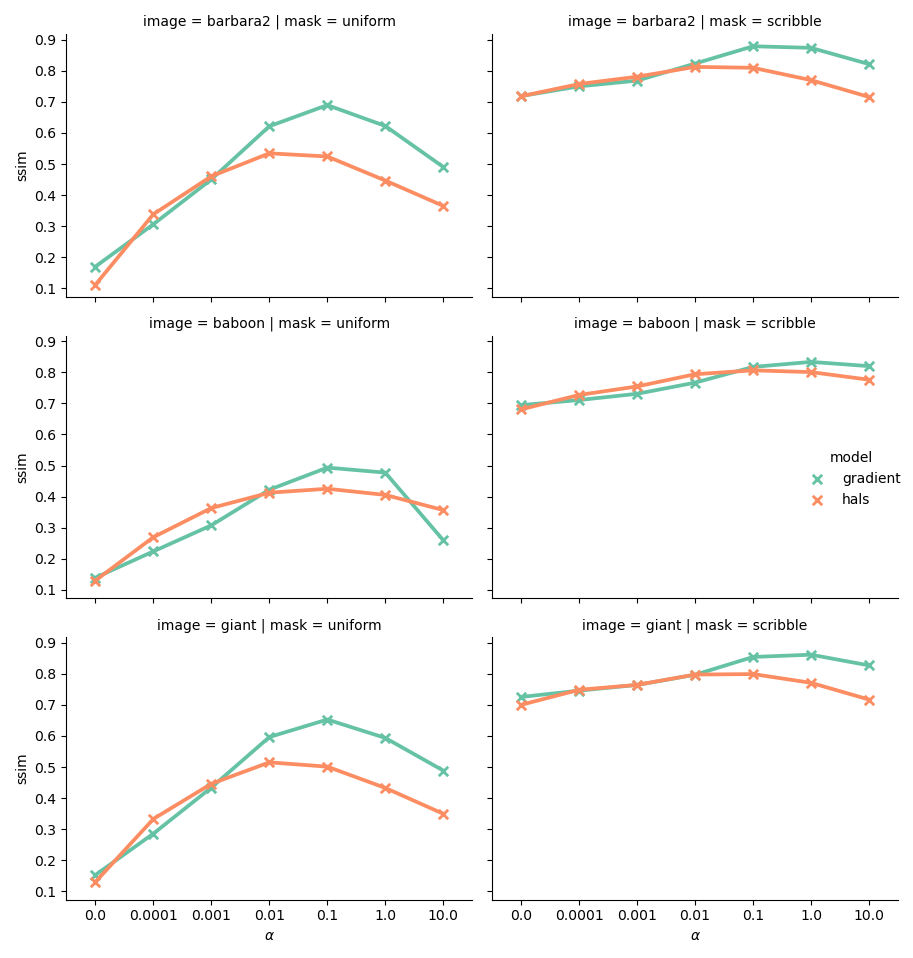}
  \caption{Evolution of the SSIM with $\alpha$. Rows represent  the three images and  columns represent the types of mask.\label{fig:image-alpha}}
\end{figure}

\begin{figure*}[h]
  \subfloat[Uniformly missing pixels.\label{fig:image-all}]{%
    \includegraphics[width=0.5\textwidth]{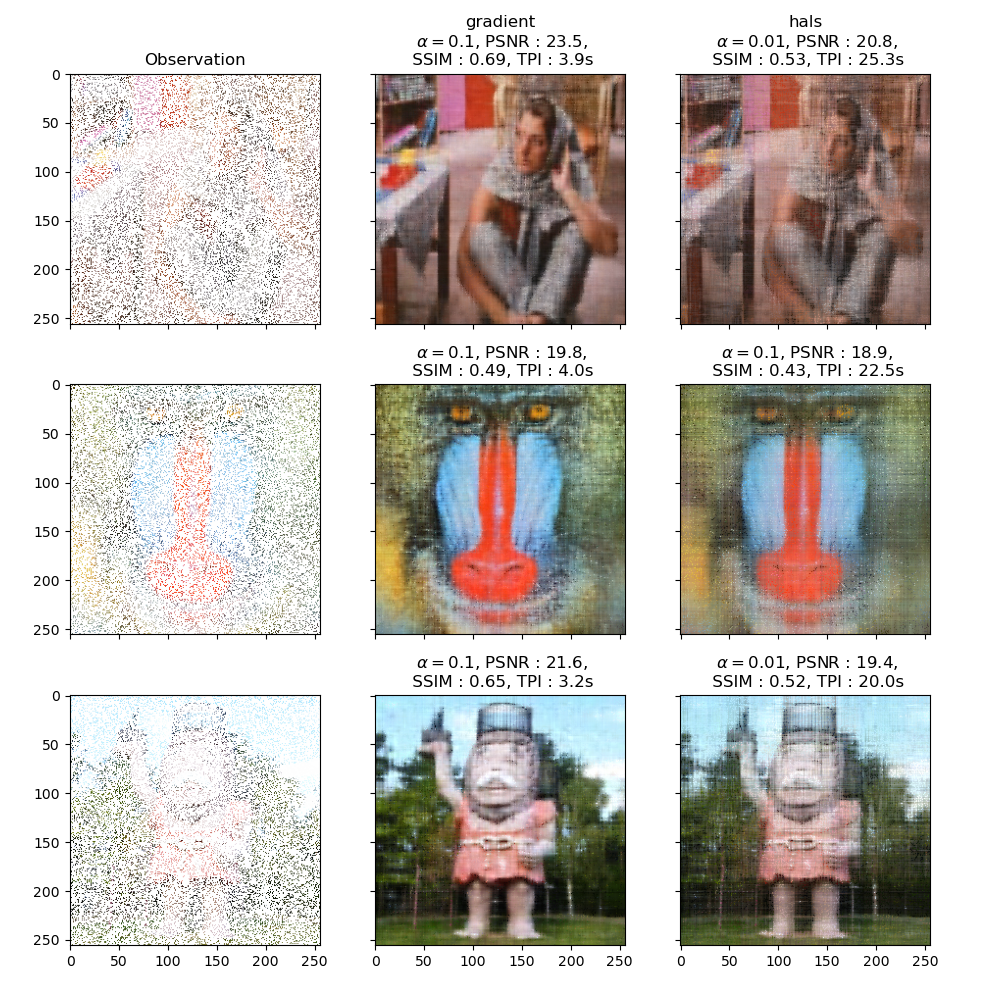}
  }
  \hfill
  \subfloat[Masked pixels. \label{fig:image-scribble}]{%
    \includegraphics[width=0.5\textwidth]{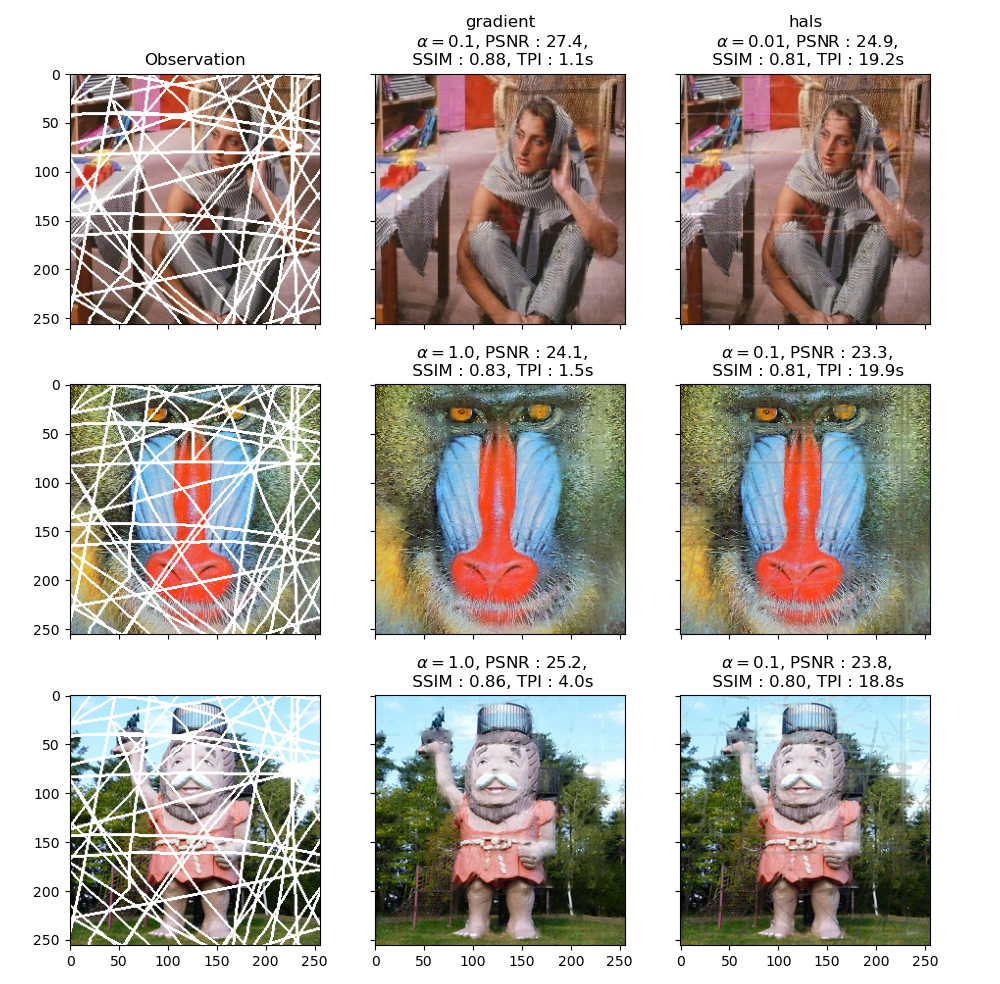}
  }
 \caption{Image completion comparison with $\alpha$ giving the best SSIM.}\label{fig:image}
\end{figure*}

\section{Conclusion}
In this contribution, we extended previous results on the existence of
a global minimum for the non-negative tensor factorization problem to
the case where some entries are missing. We showed that, under
non-restrictive assumptions on the observed entries, adding a penalty
to the quadratic loss ensures the existence of a global minimum. We
proposed two formulations of the problem: a normalized one, which is
solved using a HALS algorithm, and an unnormalized one, which is solved
using a gradient-based method. The experimental study illustrates the
advantages of the gradient-based method in computing time and in
reconstruction error.

\section{Proofs}\label{sec:proofs}
\subsection{Preliminary results}
In this section, we provide preliminary results which are necessary to prove \Cref{thm:coercivity-penalized}.
To simplify the notation, let us define, for all $\btheta = (\blambda,\bA^{(1:N)}) \in \Theta := \rset_+^R \times \prod_{n=1}^N \left(\mathbb{S}_{\nu_n}^+\right)^R$,
\begin{equation}\label{eq:def-g-loss}
g_\bW(\btheta) := \norm{\bW \oast \left(\sum_{r=1}^R \lambda_r \ba_r^{(1)} \circ \cdots \circ \ba_r^{(N)}\right)}_F \;,
\end{equation}
and, for all $\ba^{(1:N)} \in \prod_{n=1}^N \mathbb{S}_{\nu_n}^+$,
\begin{equation}\label{eq:def-f-loss}
 h_{\bW}(\ba^{(1:N)}) := \norm{\bW \oast \left(\ba^{(1)}\circ\cdots\circ\ba^{(N)}\right)}_F^2 \; .
\end{equation}
Then the following proposition holds. 
\begin{proposition}\label{prop:wsntf-restricted} Let
  $\cA = \prod_{n=1}^N \cA_n^R$ be such that for all
  $n \in \iseg{1,N}$, $\cA_n \subset \mathbb{S}_{\nu_n}^+$. Denote by
  $\overline{\cA_n}$ the closure of $\cA_n$ in
  $\mathbb{S}_{\nu_n}^+$. Then the three following assertions are
  equivalent.
  \begin{enumerate}[label=(\roman*)]
  \item\label{itm:compact-levelset-L} The function $\tilde{L}_\bW$ is coercive on $\rset_+^R \times \cA$.
  \item\label{itm:coercivity-g} The function $g_\bW$ is coercive on $\rset_+^R \times \cA$.
  \item\label{itm:condition-solution} For all $\ba^{(1:N)} \in \prod_{n=1}^N \overline{\cA_n}$, there exists $\bi \in \{\bW > 0\}$ such that $\prod_{n=1}^N \ba_{i_n}^{(n)} > 0$. 
  \end{enumerate}
\end{proposition}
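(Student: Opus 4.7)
The plan is to establish the equivalence (i)$\Leftrightarrow$(ii) by an elementary triangle inequality argument, and then (ii)$\Leftrightarrow$(iii) via a compactness analysis of the rank-one building block $h_\bW$ from \eqref{eq:def-f-loss}. As each column of $\bA^{(n)}\in(\mathbb{S}_{\nu_n}^+)^R$ has unit $\nu_n$-norm, the factor matrices live in a bounded set, so on $\rset_+^R\times\cA$ the divergence $\|\btheta_k\|\to\infty$ is equivalent to $\|\blambda^{(k)}\|\to\infty$, which in turn is equivalent to $M_k:=\max_r\lambda_r^{(k)}\to\infty$.

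For (i)$\Leftrightarrow$(ii), set $Y_\btheta:=\bW\oast\sum_r\lambda_r\tilde\ba_r^{(1)}\circ\cdots\circ\tilde\ba_r^{(N)}$, so that $\tilde L_\bW(\btheta)^{1/2}=\|\bW\oast\bX-Y_\btheta\|_F$ and $\|Y_\btheta\|_F=g_\bW(\btheta)$. The triangle inequality yields
\begin{equation*}
\bigl|g_\bW(\btheta)-\|\bW\oast\bX\|_F\bigr|\;\leq\;\tilde L_\bW(\btheta)^{1/2}\;\leq\; g_\bW(\btheta)+\|\bW\oast\bX\|_F,
\end{equation*}
from which the equivalence follows.

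For (iii)$\Rightarrow$(ii), take any $\btheta_k\in\rset_+^R\times\cA$ with $\|\btheta_k\|\to\infty$ and pick $r_k\in\argmax_r\lambda_r^{(k)}$. Non-negativity of all summands in the expansion of $g_\bW^2$ yields
\begin{equation*}
g_\bW(\btheta_k)^2\;\geq\; M_k^2\, h_\bW\!\left(\ba_{r_k}^{(1:N),k}\right).
\end{equation*}
Because $\mathbb{S}_{\nu_n}^+$ is closed and bounded in the finite-dimensional space $\rset^{I_n}$, the subset $\overline{\cA_n}$ is compact, and extracting a subsequence gives $\ba_{r_k}^{(n),k}\to\ba_\infty^{(n)}\in\overline{\cA_n}$. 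Continuity of the polynomial $h_\bW$ together with (iii) then forces $h_\bW(\ba_\infty^{(1:N)})>0$, so $g_\bW(\btheta_k)\to\infty$.

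For (ii)$\Rightarrow$(iii) I would argue by contrapositive: assuming (iii) fails, there exists $\ba^{(1:N)}\in\prod_n\overline{\cA_n}$ with $h_\bW(\ba^{(1:N)})=0$, and we can pick approximating sequences $\ba_k^{(n)}\in\cA_n$ with $\ba_k^{(n)}\to\ba^{(n)}$. If $h_\bW(\ba_k^{(1:N)})=0$ along a subsequence, set $\lambda_1^{(k)}=k$ there; otherwise set $\lambda_1^{(k)}=h_\bW(\ba_k^{(1:N)})^{-1/2}$, which diverges by continuity of $h_\bW$ at $\ba^{(1:N)}$. Completing with $\lambda_r^{(k)}=0$ and an arbitrary fixed element of $\cA_n$ for $r\geq 2$ produces $\btheta_k$ with $\|\btheta_k\|\to\infty$ while $g_\bW(\btheta_k)^2\leq 1$, contradicting (ii). The main difficulty is precisely that (iii) is stated over the closure $\overline{\cA_n}$ rather than $\cA_n$ itself: the non-coercivity witness must be built by approximation, and the growth rate of $\lambda_1^{(k)}$ must be calibrated exactly to the (possibly vanishing) rate at which $h_\bW$ decays along the approximating sequence.
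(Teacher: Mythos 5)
Your proof is correct and follows essentially the same route as the paper's: the triangle inequality for (i)$\Leftrightarrow$(ii); for (iii)$\Rightarrow$(ii) a nonnegativity lower bound on $g_\bW^2$ by a single rank-one term combined with compactness of the spheres and continuity of $h_\bW$; and for (ii)$\Rightarrow$(iii) the same kind of non-coercivity witness with one growing weight $\lambda_1$ calibrated to the decay of $h_\bW$ along an approximating sequence (the paper takes $\lambda_1=m$ with $h_\bW\leq 2^{-m}$, you take $\lambda_1=h_\bW^{-1/2}$). The only slight difference is that the paper first reformulates (iii) as the uniform condition $\inf_{\prod_n\cA_n} h_\bW>0$, which gives coercivity immediately via $g_\bW^2\geq c_{\inf}\norm{\blambda}_2^2$, whereas your sequential extraction as written yields divergence of $g_\bW(\btheta_k)$ only along a subsequence, and should be closed with the standard one-line remark that every subsequence of $(\btheta_k)$ admits a further subsequence along which $g_\bW\to+\infty$, hence the whole sequence diverges.
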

\begin{proof}
  The equivalence between \ref{itm:compact-levelset-L} and
  \ref{itm:coercivity-g} is a direct consequence of the two triangular
  inequalities.  Now, note that, by continuity of $h_\bW$,
  Assertion~\ref{itm:condition-solution} is equivalent to
\begin{equation}\label{eq:inf-W-A-stricly-pos}
  \inf
  \set{h_\bW(\ba^{(1:N)})}{\ba^{(1:N)}\in\prod_{n=1}^N \cA_n} > 0  \; . 
\end{equation}
Hence, it remains to prove that \ref{itm:coercivity-g} and \eqref{eq:inf-W-A-stricly-pos} are equivalent. 

\noindent{\bf Proof of \eqref{eq:inf-W-A-stricly-pos} $\Rightarrow$
  \ref{itm:coercivity-g}.} Assume that
Condition~\eqref{eq:inf-W-A-stricly-pos} holds and let
$(\btheta(m))_{m \in \nset} \in \left(\rset_+^R \times
  \cA\right)^\nset$ with $\norm{\btheta(m)}_{2} \xrightarrow[m\to +\infty]{} +\infty$ and, for all $m\in\nset$,
$\btheta(m) = (\blambda(m), \bA^{(1:N)}(m))$.
Then, since $\cA$ is bounded, we must have $\norm{\blambda(m)}_2 \xrightarrow[m\to +\infty]{} +\infty$ and,
using the fact that the entries of $\btheta(m)$ are all non-negative, we get
  \begin{align*}
    \left(g_{\bW}(\btheta(m))\right)^2
    &\geq \sum_{r=1}^R \left(\lambda_r(m)\right)^2 h_{\bW}(\ba_r^{(1:N)}(m)) \\
    &\geq c_{\inf} \norm{\blambda(m)}_2^2  \; ,
  \end{align*}
  where $c_{\inf}$ is the inf
  in~(\ref{eq:inf-W-A-stricly-pos}). Hence, under Condition~\eqref{eq:inf-W-A-stricly-pos},
  $g_{\bW}(\btheta(m))$ diverges to $+\infty$ as $m\to +\infty$ and
  Assertion~\ref{itm:coercivity-g} follows.

  \noindent{\bf Proof of \ref{itm:coercivity-g} $\Rightarrow$
    \eqref{eq:inf-W-A-stricly-pos}.} Assume that
  Condition~\eqref{eq:inf-W-A-stricly-pos} does not hold and let us
  show that $\tilde{g}_{\bW}$ is not coercive on
  $\rset_+^R \times \cA$ by constructing a sequence
  $(\btheta(m))_{m\in\nset} \in \left(\rset_+^R \times
    \cA\right)^\nset$ such that
  $\norm{\btheta(m)} \xrightarrow[m\to +\infty]{} +\infty$ and
  $g_\bW(\btheta(m)) \xrightarrow[m\to +\infty]{} 0$.  Let us set, for
  all $m\in\nset$, $\btheta(m) := (\blambda(m), \bA^{(1:N)}(m))$ with
  $\blambda(m) = [m,0,\cdots,0]$ and, for all $n\in\nset$,
  $\bA^{(n)}(m) = [\ba^{(n)}(m), \ba_2^{(n)},\cdots, \ba_R^{(n)}]$,
  where $\ba_2^{(n)},\cdots,\ba_R^{(n)}$ are any elements of $\cA_n$
  and $\ba^{(1:N)}(m) \in \prod_{n=1}^N \cA_n$ is such that
  $h_\bW(\ba^{(1:N)}(m)) \leq 2^{-m}$. Note that such $\ba^{(1:N)}(m)$
  exists because Condition~\eqref{eq:inf-W-A-stricly-pos} does not
  hold. With this construction, we get
  $\norm{\btheta(m)} \xrightarrow[m\to +\infty]{} +\infty$ and
  $\left(g_\bW(\btheta(m))\right)^2 = m^2 h_\bW(\ba^{(1:N)}(m)) \leq
  m^2 2^{-m} \xrightarrow[m\to+\infty]{} 0$, which concludes the
  proof.
\end{proof}
For the second result, we consider specific subsets
\begin{align}\label{eq:def-A-rho-C}
&  \cA_{n}(\bfrho,C) := \set{\ba \in \mathbb{S}_{\nu_n}^+}{\mu_n(\ba) \leq \rho_n C} \;,\\
\label{eq:def-theta-rho-C}
&  \Theta(\bfrho,C) := \rset_+^R \times \prod_{n=1}^N \left(\cA_{n}(\bfrho,C)\right)^R \; ,
\end{align}
where $C \geq 0$ and $\bfrho \in \oseg{0,+\infty}^N$. Note that, for
all $n$ such that $\rho_n = +\infty$, we have
$\cA_{n}(\bfrho,C)= \mathbb{S}_{\nu_n}^+$.  Let us define
\begin{equation}\label{eq:def-C-W-rho}
  C_\bW(\bfrho)
  :=\inf \set{C \geq 0}{
    \begin{tabular}{l}
    $\tilde{L}_{\bW}$ is not  coercive \\ on $\Theta(\bfrho,C)$
    \end{tabular}    
  }  \; .
\end{equation}
The goal of the remaining of this section it to derive an explicit
formulation of $C_\bW(\bfrho)$. To this end, we need to introduce the
following sets and constants.  For any $n\in\iseg{1,N}$ and
$\cJ \subset \iseg{1,I_n}$, define
\begin{align*}
&\cA_{n}(\cJ) := \set{\ba \in \mathbb{S}_{\nu_n}^+}{\forall j \in \cJ, a_j = 0}\;,\\
&m_{n}(\cJ) := \inf\set{\mu_n(\ba)}{\ba\in\cA_{n}(\cJ)} \;,  
\end{align*}
with the convention $\inf(\emptyset) = +\infty$. Moreover, for any
$\cI' \subset \cI$ and $n \in \iseg{1,N}$, we define the projection of
$\cI'$ onto the $n$-th coordinate as
$\pi_n(\cI') := \set{i_n}{\bi \in \cI'}$ and denote by
$$
\mathscr{J}(\cI') := \set{\cJ_{1:N}}{
  \begin{split}
    &\forall n \in \iseg{1,N}\,,\,\cJ_n\subseteq\pi_n(\cI')\,,\\
&  \forall \bi \in \cI'\,,\, \exists  n \in \iseg{1,N}, i_n \in \cJ_n    
\end{split}
}
$$
the class of sequences of sets $\cJ_{1:N} = (\cJ_1,\cdots,\cJ_N)$, where each $\cJ_n$
is a subset of all the $n$-th entries of the vector indices in $\cI'$ and such
that each vector index $\bi$ in $\cI'$ has at least one entry, say the $n$-th,
present in the corresponding $\cJ_n$.

Finally, we define, for all $\bfrho \in \oseg{0,+\infty}^N$ and $\cI'
\subset \cI$,
\begin{align}\label{eq:def-Cpq}
C(\bfrho,\cI') := \min\set{c(\bfrho,\cJ_{1:N})}{\cJ_{1:N} \in
                                  \mathscr{J}(\cI')} \; ,
\end{align}
where
$$
    c(\bfrho,\cJ_{1:N}) := \begin{cases}
      \infty & \text{if $\exists n\,,\,\cA_n(\cJ_n)=\emptyset$,} \\
      \displaystyle\max_{n
        \in
        \iseg{1,N}}
      \frac{m_{n}(\cJ_n)}{\rho_n} & \text{otherwise.}
    \end{cases}
    $$
The first case in the definition of $c(\bfrho,\cJ_{1:N})$ amounts to
use the convention $\inf(\emptyset)/\rho = +\infty$ for any $\rho\in(0,\infty]$.
We have the following results.
\begin{lemma}\label{lem:ineq-mpq}
  Let $n \in \iseg{1,N}$, and $\cJ \subset \iseg{1,I_n}$, the following assertions hold.
  \begin{enumerate}[label=(\roman*)]
  \item\label{itm:m-finite} We have $m_n(\cJ) = +\infty$ if and only if $\cJ = \iseg{1,I_n}$.
  \item\label{itm:m-null} If $\mu_n$ satisfies~\ref{item:assumption-seminorms}, we have
    $m_n(\cJ) = 0$ if and only if $\cJ = \emptyset$. 
  \end{enumerate}
\end{lemma}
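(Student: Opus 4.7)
The plan is to handle the two parts separately, each reducing to a careful application of the definitions of $\cA_n(\cJ)$, $m_n(\cJ)$, $\psi_n$, and $\psi_n^{\prime}$.

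For part~\ref{itm:m-finite}, I would first observe that, by the convention $\inf(\emptyset) = +\infty$ together with the fact that $\mu_n$ is finite-valued on the finite-dimensional space $\rset^{I_n}$, we have $m_n(\cJ) = +\infty$ if and only if $\cA_n(\cJ) = \emptyset$. It then suffices to check that $\cA_n(\cJ) = \emptyset$ if and only if $\cJ = \iseg{1,I_n}$. The direct implication is immediate: if $\cJ = \iseg{1,I_n}$, any $\ba \in \cA_n(\cJ)$ would satisfy $\ba = 0$, contradicting $\nu_n(\ba) = 1$. For the converse, pick $i_0 \in \iseg{1,I_n} \setminus \cJ$, let $\be_{i_0}$ be the corresponding canonical basis vector, and note that $\be_{i_0}/\nu_n(\be_{i_0})$ belongs to $\cA_n(\cJ)$.

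For the reverse direction of part~\ref{itm:m-null}, assume $\cJ=\emptyset$, so $\cA_n(\emptyset)=\mathbb{S}_{\nu_n}^+$. Since $\psi_n=+\infty$ by~\ref{item:assumption-seminorms}, one can choose a sequence $(\ba_k)\in\left(\mathbb{B}^+_{\mu_n}\right)^\nset$ with $\nu_n(\ba_k)\to+\infty$. Setting $\tilde\ba_k := \ba_k/\nu_n(\ba_k)\in \mathbb{S}_{\nu_n}^+$, homogeneity of $\mu_n$ yields $\mu_n(\tilde\ba_k) \leq 1/\nu_n(\ba_k) \to 0$, so $m_n(\emptyset) = 0$.

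For the forward direction, assume $\cJ \neq \emptyset$ and let $\ba \in \cA_n(\cJ)$, which has at least one zero entry. The key point here, which is the main subtlety, is to rule out $\mu_n(\ba)=0$: if $\mu_n(\ba) = 0$ then for every $c\geq 0$ the vector $c\ba$ lies in $\mathbb{B}^+_{\mu_n}$ and still has a zero entry, so $\nu_n(c\ba) = c$ could be made arbitrarily large, contradicting $\psi_n^{\prime} < \infty$. Hence $\mu_n(\ba) > 0$, and the renormalized vector $\ba/\mu_n(\ba) \in \mathbb{B}^+_{\mu_n}$ still has a zero entry, giving $\nu_n(\ba/\mu_n(\ba)) = 1/\mu_n(\ba) \leq \psi_n^{\prime}$, i.e.\ $\mu_n(\ba) \geq 1/\psi_n^{\prime}$. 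Taking the infimum over $\ba\in\cA_n(\cJ)$ yields $m_n(\cJ) \geq 1/\psi_n^{\prime} > 0$, which concludes the proof. The only delicate step is the exclusion of $\mu_n(\ba)=0$ on the zero-entry cone; everything else is direct from the definitions.
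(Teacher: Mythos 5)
Your proof is correct and follows essentially the same route as the paper: part (i) via the equivalence $m_n(\cJ)=+\infty \Leftrightarrow \cA_n(\cJ)=\emptyset \Leftrightarrow \cJ=\iseg{1,I_n}$, and part (ii) via the rescaling duality between $m_n(\cJ)$ and the suprema $\psi_n,\psi_n'$ from~\ref{item:assumption-seminorms}. The only difference is that you spell out explicitly (including the exclusion of $\mu_n(\ba)=0$ on the zero-entry cone) the homogeneity argument that the paper compresses into the one-line equivalence ``$m_n(\cJ)>0$ iff the corresponding supremum of $\nu_n$ is finite,'' which is a welcome level of detail but not a different method.
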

\begin{proof}
  Assertion~\ref{itm:m-finite} follows from the equivalence between
  $m_n(\cJ) = +\infty$ and $\cA_n(\cJ) = \emptyset$, which is itself
  equivalent to $\cJ = \iseg{1,I_n}$. For Assertion~\ref{itm:m-null},
  recall that $\psi_n=\infty$ means that $\mu_n$ is a semi-norm which is
  not a norm on the positive cone, which implies
  $m_n(\emptyset) = 0$ thus showing the ``if'' implication. For the
  ``only if'' implication,   note that $m_n(\cJ) > 0$ if and only if
  $$
  \sup\set{\nu_n(\ba)}{\ba \in\mathbb{B}^+_{\mu_n}(\ba)_,,\,\forall j\in \cJ\,,\,\ba_j = 0}<\infty\;.
  $$
  Hence, if $\psi'_n<\infty$, this happens whenever $\cJ \neq \emptyset$.
\end{proof}

\begin{lemma}\label{lem:condition-C-notnull}
 Let $\cI' \subset \cI$ and $\bfrho \in (0, +\infty]^N$. Then the following assertions hold.
  \begin{enumerate}[label=(\roman*)]
  \item\label{itm:C-finite} We have $C(\bfrho, \cI') = +\infty$ if and only if  $\cI' = \cI$.
  \item\label{itm:C-not-null} If $\mu_n$ satisfies~\ref{item:assumption-seminorms} for all $n\in\{\bfrho< +\infty\}$, then $C(\bfrho,\cI') > 0$ if and only if $\cI \setminus \cI'$ contains no $\{\bfrho = +\infty\}$-cylinder.
  \end{enumerate}
\end{lemma}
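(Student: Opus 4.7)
The plan is to reduce both assertions to purely combinatorial statements about the class $\mathscr{J}(\cI')$ of admissible coverings, using the characterizations of $m_n(\cJ)$ provided by \Cref{lem:ineq-mpq}. Throughout, I set $\cN^\star := \{\bfrho = +\infty\}$ and use the fact that since $\rho_n>0$ for all $n$, the ratio $m_n(\cJ_n)/\rho_n$ is well defined in $[0,+\infty]$ and equals $0$ whenever either $m_n(\cJ_n)=0$ or $\rho_n=+\infty$ with $m_n(\cJ_n)<+\infty$.

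For (i), the first observation is that inspection of the definition of $c$ yields $c(\bfrho,\cJ_{1:N})=+\infty$ iff $\cA_n(\cJ_n)=\emptyset$ for some $n$, which by \Cref{lem:ineq-mpq}\ref{itm:m-finite} is equivalent to $\cJ_n=\iseg{1,I_n}$. Hence $C(\bfrho,\cI')=+\infty$ iff every $\cJ_{1:N}\in\mathscr{J}(\cI')$ satisfies $\cJ_n=\iseg{1,I_n}$ for some $n$. I then argue both directions. If $\cI'=\cI$ and some covering had $\cJ_n\subsetneq\iseg{1,I_n}$ for all $n$, picking $j_n^\star\in\iseg{1,I_n}\setminus\cJ_n$ would produce a vector $(j_1^\star,\ldots,j_N^\star)\in\cI$ that is uncovered, contradicting membership in $\mathscr{J}(\cI)$. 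Conversely, if $\cI'\subsetneq\cI$, I pick $\bj^\star\in\cI\setminus\cI'$ and take $\cJ_n:=\pi_n(\cI')\setminus\{j_n^\star\}$; each $\cJ_n$ is a proper subset of $\iseg{1,I_n}$, and since every $\bi\in\cI'$ differs from $\bj^\star$ in at least one coordinate $n$ where $i_n\in\pi_n(\cI')$, we get $i_n\in\cJ_n$, so $\cJ_{1:N}\in\mathscr{J}(\cI')$ and $c<+\infty$.

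For (ii), under the AC hypothesis imposed on $\mu_n$ for $n\notin\cN^\star$, \Cref{lem:ineq-mpq}\ref{itm:m-null} gives $m_n(\cJ_n)=0 \Leftrightarrow \cJ_n=\emptyset$ for such $n$. Combining this with the behavior of the ratios recalled above, one sees that $c(\bfrho,\cJ_{1:N})=0$ if and only if $\cJ_n=\emptyset$ for every $n\notin\cN^\star$ and $\cJ_n\subsetneq\iseg{1,I_n}$ for every $n\in\cN^\star$. Thus $C(\bfrho,\cI')=0$ iff such a covering exists in $\mathscr{J}(\cI')$, and it remains to show this is equivalent to $\cI\setminus\cI'$ containing a $\cN^\star$-cylinder. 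Given such a cylinder $\cC((j_n)_{n\in\cN^\star})\subseteq\cI\setminus\cI'$, I define $\cJ_n:=\pi_n(\cI')\setminus\{j_n\}$ for $n\in\cN^\star$ and $\cJ_n:=\emptyset$ otherwise: disjointness from $\cI'$ forces every $\bi\in\cI'$ to differ from $(j_n)_{n\in\cN^\star}$ in some coordinate $n\in\cN^\star$, which then lies in $\cJ_n$, giving the covering property. Conversely, given a covering with the stated structure, I pick $j_n\in\iseg{1,I_n}\setminus\cJ_n$ for each $n\in\cN^\star$: any $\bi\in\cC((j_n)_{n\in\cN^\star})$ satisfies $i_n=j_n\notin\cJ_n$ for $n\in\cN^\star$ and $i_n\notin\emptyset=\cJ_n$ for $n\notin\cN^\star$, so it cannot lie in $\cI'$, hence $\cC((j_n)_{n\in\cN^\star})\subseteq\cI\setminus\cI'$.

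The whole proof is essentially combinatorial once \Cref{lem:ineq-mpq} is in hand; the main technicality is careful bookkeeping of three simultaneous constraints on coverings (inclusion $\cJ_n\subseteq\pi_n(\cI')$, strict inclusion $\cJ_n\subsetneq\iseg{1,I_n}$, and the covering property itself) together with the case split on whether $\rho_n$ is finite. In particular, one must track that the AC assumption is only invoked at indices $n\notin\cN^\star$, which is precisely what forces $\cJ_n=\emptyset$ at those coordinates in the characterization above.
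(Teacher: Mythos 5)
Your proof is correct and follows essentially the same route as the paper's: both reduce the statement to combinatorics on $\mathscr{J}(\cI')$ via \Cref{lem:ineq-mpq}, characterize when $c(\bfrho,\cJ_{1:N})$ is infinite (resp.\ zero) coordinatewise, and then use the same explicit constructions ($\cJ_n=\pi_n(\cI')\setminus\{j_n\}$ from a point or cylinder in $\cI\setminus\cI'$, and conversely picking $j_n\in\iseg{1,I_n}\setminus\cJ_n$). If anything, your choice $\cJ_n=\pi_n(\cI')\setminus\{j_n\}$ in the cylinder-to-covering direction of (ii) is slightly more careful than the paper's $\cJ_n=\iseg{1,I_n}\setminus\{j_n\}$, since it visibly respects the constraint $\cJ_n\subseteq\pi_n(\cI')$ in the definition of $\mathscr{J}(\cI')$.
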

\begin{proof}
  Using Assertion~\ref{itm:m-finite} of \Cref{lem:ineq-mpq}, we get that 
  \begin{align*}
    C&(\bfrho, \cI') < +\infty \\
    &\Leftrightarrow  \exists\, \cJ_{1:N} \in \mathscr{J}(\cI'), \forall n \in \iseg{1,N}, m_{n}(\cJ_n) < +\infty\\
    &\Leftrightarrow  \exists\, \cJ_{1:N} \in \mathscr{J}(\cI'), \forall n \in \iseg{1,N}, \cJ_n \neq \iseg{1,I_n}\\
    &\Leftrightarrow  \exists\, \cJ_{1:N} \in \mathscr{J}(\cI'), \exists \bi \in \cI, \forall n \in \iseg{1,N}, i_n \notin \cJ_n \\
    &\Leftrightarrow  \mathscr{J}(\cI') \cap (\mathscr{J}(\cI))^c  \neq \emptyset  \; .
  \end{align*}
  Hence, to conclude the proof of Assertion~\ref{itm:C-finite}, we need to show that $\mathscr{J}(\cI') \cap (\mathscr{J}(\cI))^c  \neq \emptyset$ if and only if $\cI' \neq \cI$. The ``only if'' implication is straightforward by contraposition. For the ``if'' implication, assume that $\cI' \neq \cI$ and take $\bi \in \cI \setminus \cI'$. Then it is easily seen that, taking $\cJ_n = \pi_n(\cI') \setminus \{i_n\}$ for all $n\in\iseg{1,N}$, we get $\cJ_{1:N} \in \mathscr{J}(\cI') \cap (\mathscr{J}(\cI))^c$ which is therefore non-empty. This concludes the proof of Assertion~\ref{itm:C-finite}.
  
  For Assertion~\ref{itm:C-not-null}, note that $C(\bfrho,\cI') = 0$ is equivalent to
  $$
  \exists\, \cJ_{1:N} \in \mathscr{J}(\cI'),\, \forall n \in \iseg{1,N},\, \bfrho_n^{-1} m_{n}(\cJ_n) = 0 \; ,
  $$
  which, by \Cref{lem:ineq-mpq} and \ref{item:assumption-seminorms},
  is in turn equivalent to 
  \begin{equation}\label{eq:equiv-C-null} 
    \exists\, \cJ_{1:N} \in \mathscr{J}(\cI'),\,
      \begin{cases}
        \forall n \in \{\bfrho = +\infty\}, \, \cJ_n \neq \iseg{1,I_n}\;,\\
        \forall n \in \{\bfrho < +\infty\}, \, \cJ_n = \emptyset\;.\\
      \end{cases} 
    \end{equation}
    We now prove that \eqref{eq:equiv-C-null} holds if and only if there exists $(j_n)_{n\in\{\bfrho=+\infty\}} \in \prod_{n\in\{\bfrho=+\infty\}} \iseg{1,I_n}$ such that $\cC((j_n)_{n\in\{\bfrho=+\infty\}}) \subset \cI \setminus \cI'$. First, assume that  \eqref{eq:equiv-C-null} holds and take $j_n \in \iseg{1,I_n}\setminus \{\cJ_n\}$ for all $n\in\{\bfrho=+\infty\}$. Assume that $\cC((j_n)_{n\in\{\bfrho=+\infty\}}) \not\subset \cI \setminus \cI'$. This means that there exists $\bi \in \cC((j_n)_{n\in\{\bfrho=+\infty\}}) \cap \cI'$. Then, by definition of $\mathscr{J}(\cI')$, there exists $n\in\{\bfrho=+\infty\}$ such that $j_n = i_n \in \cJ_n$ which contradicts the fact that $j_n \in \iseg{1,I_n}\setminus \{\cJ_n\}$.  Hence $\cC((j_n)_{n\in\{\bfrho=+\infty\}}) \subset \cI \setminus \cI'$ thus proving the ``only if'' implication. For the ``if'' implication, assume that there exists $(j_n)_{n\in\{\bfrho=+\infty\}} \in \prod_{n\in\{\bfrho=+\infty\}} \iseg{1,I_n}$ such that $\cC((j_n)_{n\in\{\bfrho=+\infty\}}) \subset \cI \setminus \cI'$. Then, we get \eqref{eq:equiv-C-null} by taking $\cJ_n = \emptyset$ for $n\in\{\bfrho<+\infty\}$ and $\cJ_n = \iseg{1,I_n}\setminus\{j_n\}$ for $n\in\{\bfrho=+\infty\}$. 
\end{proof}

\begin{lemma}\label{lem:def-C-with-m}
  For all $\bfrho \in \oseg{0,+\infty}^N$, we have
  \begin{equation}\label{eq:def-C-with-m}
    C_\bW(\bfrho) 
    = C(\bfrho,\{\bW > 0\}) \; ,
  \end{equation}
\end{lemma}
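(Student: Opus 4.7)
The plan is to prove both inequalities $C_\bW(\bfrho)\leq C(\bfrho,\{\bW>0\})$ and $C_\bW(\bfrho)\geq C(\bfrho,\{\bW>0\})$ by combining \Cref{prop:wsntf-restricted} with a compactness argument. Since each $\mu_n$ is continuous on the compact set $\mathbb{S}_{\nu_n}^+$, the sublevel set $\cA_n(\bfrho,C)$ is closed and equals its own closure. Applying \Cref{prop:wsntf-restricted} with $\cA_n=\cA_n(\bfrho,C)$ then shows that $\tilde{L}_\bW$ fails to be coercive on $\Theta(\bfrho,C)$ if and only if there exists $\ba^{(1:N)}\in\prod_n\cA_n(\bfrho,C)$ such that $\prod_n a_{i_n}^{(n)}=0$ for every $\bi\in\{\bW>0\}$. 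Since $\Theta(\bfrho,C')\subseteq\Theta(\bfrho,C)$ whenever $C'\leq C$, the set of $C$'s on which non-coercivity holds is upward closed in $[0,+\infty]$, so $C_\bW(\bfrho)$ is exactly its threshold.

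For $C_\bW(\bfrho)\geq C(\bfrho,\{\bW>0\})$, I would fix $C>C_\bW(\bfrho)$, take a witness $\ba^{(1:N)}$ provided by the characterization above, and set $\cJ_n:=\{j\in\pi_n(\{\bW>0\})\,:\,a_j^{(n)}=0\}$. The vanishing-product condition directly yields $\cJ_{1:N}\in\mathscr{J}(\{\bW>0\})$, while $\ba^{(n)}\in\cA_n(\cJ_n)$ gives $m_n(\cJ_n)\leq \mu_n(\ba^{(n)})\leq \rho_n C$; hence $c(\bfrho,\cJ_{1:N})\leq C$ and $C(\bfrho,\{\bW>0\})\leq C$. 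Letting $C\downarrow C_\bW(\bfrho)$ concludes this direction.

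For the reverse inequality, set $C^*:=C(\bfrho,\{\bW>0\})$. If $C^*=+\infty$, then by \Cref{lem:condition-C-notnull} we have $\{\bW>0\}=\cI$, and any $\ba^{(1:N)}\in\prod_n\mathbb{S}_{\nu_n}^+$ admits some index $\bi\in\cI=\{\bW>0\}$ with $\prod_n a_{i_n}^{(n)}>0$ (pick $i_n$ with $a_{i_n}^{(n)}>0$, possible because $\ba^{(n)}\neq 0$), so $\tilde{L}_\bW$ is coercive on every $\Theta(\bfrho,C)$ and $C_\bW(\bfrho)=+\infty$ as required. Otherwise, choose $\cJ_{1:N}\in\mathscr{J}(\{\bW>0\})$ attaining the minimum in~\eqref{eq:def-Cpq}, which exists because $\mathscr{J}(\{\bW>0\})$ is finite; each $\cA_n(\cJ_n)$ is closed in the compact $\mathbb{S}_{\nu_n}^+$, so $m_n(\cJ_n)$ is attained by some $\ba^{(n)}\in\cA_n(\cJ_n)$ with $\mu_n(\ba^{(n)})=m_n(\cJ_n)\leq \rho_n C^*$, i.e.\ $\ba^{(n)}\in\cA_n(\bfrho,C^*)$. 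The defining property of $\mathscr{J}(\{\bW>0\})$ forces $\prod_n a_{i_n}^{(n)}=0$ on $\{\bW>0\}$, so \Cref{prop:wsntf-restricted} yields non-coercivity at $C^*$ and $C_\bW(\bfrho)\leq C^*$.

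The main subtleties will lie in handling the conventions $\rho_n=+\infty$ (for which $\rho_n C=+\infty$ trivializes the constraint in $\cA_n(\bfrho,C)$ and $m_n(\cJ_n)/\rho_n$ is read as $0$) and $\cJ_n=\iseg{1,I_n}$ (for which $\cA_n(\cJ_n)=\emptyset$ and $c(\bfrho,\cJ_{1:N})=+\infty$); both are consistent with the argument above, but require careful bookkeeping. Beyond \Cref{prop:wsntf-restricted}, the only analytic ingredient is compactness of $\mathbb{S}_{\nu_n}^+$, used to attain $m_n(\cJ_n)$ and to transform the infimum in the definition of $c$ into a bona fide minimum.
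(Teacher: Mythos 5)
Your proposal is correct and follows essentially the same route as the paper: both rest on applying \Cref{prop:wsntf-restricted} to the closed sets $\cA_n(\bfrho,C)$, converting a non-coercivity witness into an element of $\mathscr{J}(\{\bW>0\})$ for one inequality, and using compactness of $\cA_n(\cJ_n)$ to attain $m_n(\cJ_n)$ for the other (the paper packages this attainment as the identity~\eqref{eq:invert-inf-in-c}, and treats the $\{\bW>0\}=\cI$ case as immediate where you spell it out via \Cref{lem:condition-C-notnull}). The differences are purely organizational, and your handling of the conventions $\rho_n=+\infty$ and $\cA_n(\cJ_n)=\emptyset$ is consistent with the paper's definitions.
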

\begin{proof}
  The proof relies on the identity 
    \begin{multline}\label{eq:invert-inf-in-c}
      c(\bfrho, \cJ_{1:N}) = \\ 
      \inf \set{\max_{n\in\iseg{1,N}}\rho_n^{-1} \mu_n(\ba^{(n)})}{\forall n, \ba^{(n)} \in \cA_{n}(\cJ_n)}\; .
    \end{multline}    
    To prove \eqref{eq:invert-inf-in-c}, first note that, if there
    exists $n\in\iseg{1,I_n}$, such that $\cJ_n = \iseg{1,I_n}$, then
    the two terms of \eqref{eq:invert-inf-in-c} are equal to
    $+\infty$.  We now assume that, for all $n\in\iseg{1,I_n}$,
    $\cJ_n \neq \iseg{1,I_n}$. Then the inequality $\leq$ of
    \eqref{eq:invert-inf-in-c} is a straightforward consequence of the
    definition of $c(\bfrho, \cJ_{1:N})$. Let us now show that there
    exists $\ba^{(1:N)} \in \prod_{n=1}^N \cA_{n}(\cJ_n)$ such that
    $c(\bfrho,\cJ_{1:N}) = \max_{n\in\iseg{1,N}}\rho_n^{-1}
    \mu_n(\ba^{(n)})$. It suffices to take, for $n\in\iseg{1,N}$,
    $\ba^{(n)} \in \cA_{n}(\cJ_n)$ such that
    $\mu_n(\ba^{(n)}) = m_{n}(\cJ_n)$. Such $\ba^{(n)}$ exists because
    $\cA_{n}(\cJ_n)$ is compact and $\mu_n$ is continuous. This
    concludes the proof of \eqref{eq:invert-inf-in-c}.

    Now, to prove \eqref{eq:def-C-with-m}, we show that, for all $C \geq 0$, $\tilde{L}_\bW$ is not coercive on $\Theta(\bfrho,C)$ if and only if $C \geq C(\bfrho,\{\bW > 0\})$. For the ``only if'' implication, assume that $C$ is such that $\tilde{L}_\bW$ is not coercive on $\Theta(\bfrho,C)$. Then, from \Cref{prop:wsntf-restricted} and closeness of the $\cA_{n}(\bfrho,C)$'s defined in \eqref{eq:def-A-rho-C}, we get that there exists $\ba^{(1:N)} \in \prod_{n=1}^N \cA_{n}(\bfrho,C)$ such that, for all $\bi\in\{\bW > 0\}$, there exists $n\in\iseg{1,N}$ such that  $a_{i_n}^{(n)} = 0$. Using this observation, we construct $\cJ_{1:N} \in \mathscr{J}(\{\bW > 0\})$ by the following procedure. Start with $\cJ_n = \emptyset$ for all $n\in\iseg{1,N}$ and then, for each $\bi \in \{\bW > 0\}$, select one of the $n$'s such that $a_{i_n}^{(n)} = 0$ and put $i_n$ in $\cJ_n$. With this construction, we have $\ba^{(1:N)} \in \prod_{n=1}^N \cA_{n}(\cJ_n) \cap \cA_n(\bfrho,C)$ and therefore 
$$
C \geq \max_{n\in\iseg{1,N}} \rho_n^{-1} \mu_n(\ba^{(n)})
\geq c(\bfrho, \cJ_{1:N})
\geq C(\bfrho,\{\bW > 0\}) \; ,
$$
where the first inequality comes from \eqref{eq:def-A-rho-C}, the second from \eqref{eq:invert-inf-in-c} and the last from \eqref{eq:def-Cpq}.

For the ``if'' implication, let us take $C \geq C(\bfrho,\{\bW > 0\})$ and show that Assertion~\ref{itm:condition-solution} of \Cref{prop:wsntf-restricted} does not hold. Note that we can take $\bW$ such that $\{\bW > 0\} \neq \cI$ because the other case is straightforward. Then, by definition of $C(\bfrho,\{\bW > 0\})$, we know that there exists $\cJ_{1:N} \in\mathscr{J}(\{\bW > 0\})$ such that $C(\bfrho,\{\bW > 0\}) = c(\bfrho,\cJ_{1:N})$. Moreover, since $\{\bW > 0\} \neq \cI$, Assertion~\ref{itm:m-finite} of \Cref{lem:ineq-mpq} gives that $c(\bfrho,\cJ_{1:N}) < +\infty$ and therefore we are in the case where the infimum in \eqref{eq:invert-inf-in-c} is reached.  Hence there exists $\ba^{(1:N)}\in \prod_{n=1}^N \cA_{n}(\cJ_n)$ such that $\max_{n\in\iseg{1,N}}\rho_n^{-1}\mu_n(\ba^{(n)}) = C(\bfrho,\{\bW > 0\})$. This gives that, for all $n\in\iseg{1,N}$, $\mu_n(\ba^{(n)}) \leq \rho_n C(\bfrho, \{\bW > 0\}) \leq \rho_n C$  and therefore  $\ba^{(n)} \in \cA_{n}(\bfrho,C)$, as defined in \eqref{eq:def-A-rho-C}. On the other hand, $\cJ_{1:N} \in \mathscr{J}(\{\bW > 0\})$ means that, for all $\bi \in\{\bW > 0\}$, there exists $n\in\iseg{1,N}$ such that $i_n \in \cJ_n$ and we get that $\ba_{i_n}^{(n)}= 0$ because $\ba^{(n)}\in\cA_{n}(\cJ_n)$. Hence Assertion~\ref{itm:condition-solution} of \Cref{prop:wsntf-restricted} does  not hold and the proof is concluded.
\end{proof}

\subsection{Proofs of \Cref{thm:coercivity-penalized} and \Cref{lem:smoothness-continuity-norms}}
\begin{proof}[Proof of \Cref{thm:coercivity-penalized}] In the proof,
  we use the functions $g_\bW$ and $h_\bW$ defined respectively in
  \eqref{eq:def-g-loss} and \eqref{eq:def-f-loss}. Note that, as in \Cref{prop:wsntf-restricted}, we have that
  $\tilde{f}_{\bW,\balpha}$ is coercive on $\Theta$ if and only if
  $g_\bW^2 + \tilde{\cP}_{\balpha}$ is coercive on $\Theta$.

\noindent {\bf Proof of \ref{itm:no-missing-cylinder} $\Rightarrow$ \ref{itm:frscp-compact-level-set}.} Assume that \ref{itm:no-missing-cylinder} holds and let us show that $g_\bW^2 + \tilde{\cP}_{\balpha}$ is coercive on $\Theta$. Take $\btheta = (\blambda,\bA^{(1:N)})\in \Theta$ and let $s\in\iseg{1,R}$ be such that $\lambda_s = \max_{r\in\iseg{1,R}}\lambda_r= \norm{\blambda}_\infty$. Then, since all the entries of $\btheta$ are non-negative, we have
  \begin{align*}
    (g_\bW&(\btheta))^2 + \tilde{\cP}_{\balpha}(\btheta) \\
    &\geq (\lambda_s^2 \wedge \lambda_s^d) \left( h_{\bW}(\ba_s^{(1:N)}) + \sum_{n=1}^N \alpha_n \mu_n^p(\ba_s^{(n)}))\right) \\
    &\geq \left[\norm{\blambda}_{\infty}^{d\wedge 2} - 1\right]_+ \eta \;,
  \end{align*}
  where
  $$
  \eta := \inf_{\ba^{(1:N)} \in \prod_{n=1}^N \mathbb{S}_{\nu_n}^+}\left(h_\bW(\ba^{(1:N)})+ \sum_{n=1}^N\alpha_n \mu_n^p(\ba^{(n)})\right)\; . 
  $$
  Hence, to prove that $g_\bW^2 + \tilde{\cP}_{\balpha}$ is coercive
  on $\Theta$, it suffices to prove that $\eta > 0$. Now let us set,
  for all $n\in\iseg{1,N}$, $\rho_n = \alpha_n^{-p}$ with the
  convention that $0^{-p} = +\infty$. Then, \Cref{lem:def-C-with-m}
  and Assertion~\ref{itm:C-not-null} of \Cref{lem:condition-C-notnull}
  give that $C_\bW(\bfrho) > 0$. Take
  $0 < C < C_{\bW}(\bfrho)$ and set
  $\cA := \prod_{n=1}^N \cA_{n}(\bfrho,C)$ with $\cA_{n}(\bfrho,C)$
  defined as in \eqref{eq:def-A-rho-C}. Then, by definition of
  $C_\bW(\bfrho)$ in~(\ref{eq:def-C-W-rho}) and Relation~\eqref{eq:inf-W-A-stricly-pos}, we have
  $\inf_\cA h_\bW > 0$. Moreover, for all $\ba^{(1:N)} \in \cA^c$,
  there exists $k\in\iseg{1,N}$ such that
  $\ba^{(k)} \notin \cA_{k}(\bfrho,C)$, i.e.
  $\alpha_k \mu^p(\ba_r^{(k)}) > C^p$ and
  $\sum_{n=1}^N \alpha_n \mu_n^p(\ba^{(n)}) \geq \alpha_k
  \mu_n^p(\ba_r^{(k)}) > C^p$. Hence, we get
  $h_\bW + \sum_{n=1}^N \alpha_n \mu_n^p \geq \left(\inf_\cA
    h_\bW\right) \1_\cA + C^p \1_{\cA^c}$ so that
  $\eta \geq \left(\inf_\cA h_\bW\right) \wedge C^p > 0$, thus
  concluding the proof.

  \noindent {\bf Proof of \ref{itm:frscp-compact-level-set}
    $\Rightarrow$ \ref{itm:no-missing-cylinder}.}  Let us assume that
  \ref{itm:no-missing-cylinder} does not hold and show that
  $g_\bw^2 + \tilde{\cP}_{\balpha}$ is not coercive on $\Theta$ by
  constructing a sequence $(\btheta(m))_{m\in\nset}\in\Theta^\nset$
  such that $\btheta(m) \xrightarrow[m\to +\infty]{} +\infty$ and
  $g_\bW^2(\btheta(m)) + \tilde{\cP}_{\balpha}(\btheta(m))
  \xrightarrow[m\to +\infty]{} 0$.  We set, for all
  $n\in\iseg{1,N}$, $\rho_n = \alpha_n^{-p}$ with the convention that
  $0^{-p} = +\infty$. Then, from Assertion~\ref{itm:C-not-null} of
  \Cref{lem:condition-C-notnull}, we get that $C_\bW(\bfrho) = 0$. By
  definition of $C_\bW(\bfrho)$ and
  Relation~\eqref{eq:inf-W-A-stricly-pos}, this gives that for all
  $m \in \nset$, $\inf_{\cA^{(m)}} h_\bW = 0$ with
  $\cA^{(m)} = \prod_{n=1}^N \cA_n(\bfrho,2^{-m/p})$. In particular,
  we can find $\ba^{(1:N)}(m) \in \cA^{(m)}$ such that
  $h_\bW(\ba^{(1:N)}(m)) \leq 2^{-m}$. Now, take, for all $m\in\nset$,
  $\btheta(m) = (\blambda(m), \bA^{(1:N)})$ with
  $\blambda(m) = [m,0,\cdots,0]$ and, for all $n\in\iseg{1,N}$,
  $\bA^{(n)}(m) = [\ba^{(n)}(m), \ba_2^{(n)},\cdots, \ba_R^{(n)}]$,
  where $\ba_2^{(n)}, \cdots, \ba_R^{(n)}$ are arbitrary elements of
  $\mathbb{S}_{\nu_n}^+$. In this case, we have
  $\norm{\btheta(m)}_2 \xrightarrow[m \to +\infty]{} +\infty$, but
  $\left(g_\bW(\btheta(m))\right)^2 +
  \tilde{\cP}_{\balpha}(\btheta(m)) \leq (m^2 + N m^d) 2^{-m}
  \xrightarrow[m\to+\infty]{} 0$, which concludes the proof.
\end{proof}

\begin{proof}[Proof of \Cref{lem:smoothness-continuity-norms}]
  The fact that none of the $\mu_n$'s in Points~\ref{itm:tv-p} and
  \ref{itm:spline} is a norm on the positive cone proves that 
  $\psi_n=\infty$ in both cases. We now show that $\psi'_n<\infty$
  holds in both cases. By equivalence of the norms, we can assume
  without loss of generality that $\nu_n = \norm{\cdot}_\infty$. For
  $\mu_n$ as in Case~\ref{itm:tv-p}, we can take $p = 1$
  without loss of generality. Then, for all $i,j\in\iseg{1,I_n}$, we
  have
  $\abs{a_j - a_i} \leq \sum_{k=j\wedge i}^{j\vee i - 1} \abs{a_{k+1} -
    a_k} \leq \norm{\ba}_{{\rm TV},1}$. This implies
  $\psi'_n<\infty$. Next, we take $\mu_n$ as in
  Case~\ref{itm:spline}. Let $\ba \in\mathbb{B}^+_n$ such that $a_j=0$
  for some $j\in\iseg{1,I_n}$.  Let
  $(\hat{a}_k)_{k\in\zset}$ be the Fourier coefficients of the
  corresponding Spline function $a$,
  $\hat{a}_k=\int_0^1a(u)\,\rme^{-2\rmi\pi ku}\;\rmd u$. Then, we have
  $\mu_n(\ba) = \varsqrt{\sum_{k\in\zset} \abs{\hat{a}_k}^2 (2\pi k)^2}\leq1$.
  Now, using the Cauchy-Schwarz inequality and the fact that
  $1/12=\sum_{k\in\zset^*} (2\pi k)^{-2}$, where $\zset^*$ is the set
  of non-zero relative integers, we get that
  $$
  \sum_{k\in\zset^*} \abs{\hat{a}_k} \leq \varsqrt{\sum_{k\in\zset^*} (2\pi k)^{-2}} \mu_n(\ba) \leq \frac{1}{2\sqrt{3}} \; . 
  $$
  In particular, we get $\sum_{k\in\zset}\abs{\hat{a}_k} < +\infty$ and therefore, for all $u\in [0,1]$, we have
  $a(u) = \sum_{k\in\zset} \hat{a}_k \rme^{2\rmi \pi k u}$. This implies 
  that
  \begin{align*}
    \norm{\ba}_{\infty}
    &=    \norm{\ba - a_j}_{\infty}\\
    &\leq \sup_{u\in [0,1]} \abs{a(u) - a(u_j)} \\
    &\leq \sum_{k\in\zset^*} \abs{\hat{a}_k} \sup_{u\in [0,1]} \abs{\rme^{2 \rmi \pi k u} - \rme^{2 \rmi \pi k u_j}} \\
    &\leq 2 \sum_{k\in\zset^*}  \abs{\hat{a}_k} \; .                       
  \end{align*}
Hence, with the previous inequality, $\psi'_n\leq \frac1{\sqrt{3}}$ in this case, which concludes the
proof.
\end{proof}



\begin{thebibliography}{10}
\providecommand{\url}[1]{#1}
\csname url@samestyle\endcsname
\providecommand{\newblock}{\relax}
\providecommand{\bibinfo}[2]{#2}
\providecommand{\BIBentrySTDinterwordspacing}{\spaceskip=0pt\relax}
\providecommand{\BIBentryALTinterwordstretchfactor}{4}
\providecommand{\BIBentryALTinterwordspacing}{\spaceskip=\fontdimen2\font plus
\BIBentryALTinterwordstretchfactor\fontdimen3\font minus
  \fontdimen4\font\relax}
\providecommand{\BIBforeignlanguage}[2]{{%
\expandafter\ifx\csname l@#1\endcsname\relax
\typeout{** WARNING: IEEEtran.bst: No hyphenation pattern has been}%
\typeout{** loaded for the language `#1'. Using the pattern for}%
\typeout{** the default language instead.}%
\else
\language=\csname l@#1\endcsname
\fi
#2}}
\providecommand{\BIBdecl}{\relax}
\BIBdecl

\bibitem{KoBa09}
T.~G. Kolda and B.~W. Bader, ``Tensor decompositions and applications,''
  \emph{SIAM Review}, vol.~51, no.~3, pp. 455--500, September 2009.

\bibitem{Cichocki-NTF}
A.~Cichocki, R.~Zdunek, A.~H. Phan, and S.~Amari, \emph{Nonnegative Matrix and
  Tensor Factorizations: Applications to Exploratory Multi-Way Data Analysis
  and Blind Source Separation}.\hskip 1em plus 0.5em minus 0.4em\relax Wiley
  Publishing, 2009.

\bibitem{Sidiropoulos17-tensor-sp-ml}
N.~D. Sidiropoulos, L.~De~Lathauwer, X.~Fu, K.~Huang, E.~E. Papalexakis, and
  C.~Faloutsos, ``Tensor decomposition for signal processing and machine
  learning,'' \emph{IEEE Transactions on Signal Processing}, vol.~65, no.~13,
  pp. 3551--3582, 2017.

\bibitem{Hillar-tensor-NP-Hard}
\BIBentryALTinterwordspacing
C.~J. Hillar and L.-H. Lim, ``Most tensor problems are np-hard,'' \emph{J.
  Acm}, vol.~60, no.~6, Nov. 2013. [Online]. Available:
  \url{https://doi.org/10.1145/2512329}
\BIBentrySTDinterwordspacing

\bibitem{Silva08-illposed-tensor}
\BIBentryALTinterwordspacing
V.~de~Silva and L.-H. Lim, ``Tensor rank and the ill-posedness of the best
  low-rank approximation problem,'' \emph{SIAM J. Matrix Anal. Appl.}, vol.~30,
  no.~3, pp. 1084--1127, 2008. [Online]. Available:
  \url{https://doi.org/10.1137/06066518X}
\BIBentrySTDinterwordspacing

\bibitem{Lim09-NTF}
\BIBentryALTinterwordspacing
L.-H. Lim and P.~Comon, ``Nonnegative approximations of nonnegative tensors,''
  \emph{Journal of Chemometrics}, vol.~23, no. 7‐8, pp. 432--441, 2009.
  [Online]. Available:
  \url{https://onlinelibrary.wiley.com/doi/abs/10.1002/cem.1244}
\BIBentrySTDinterwordspacing

\bibitem{Tikhonov-ill-posed}
A.~N. Tikhonov and V.~Y. Arsenin, \emph{Solutions of ill-posed problems}, ser.
  Scripta Series in Mathematics.\hskip 1em plus 0.5em minus 0.4em\relax V. H.
  Winston \& Sons, Washington, D.C.: John Wiley \& Sons, New York-Toronto,
  Ont.-London, 1977, translated from the Russian, Preface by translation editor
  Fritz John.

\bibitem{Vapnik-statistical-learning-theory}
V.~N. Vapnik, \emph{Statistical learning theory}, ser. Adaptive and Learning
  Systems for Signal Processing, Communications, and Control.\hskip 1em plus
  0.5em minus 0.4em\relax John Wiley \& Sons, Inc., New York, 1998, a
  Wiley-Interscience Publication.

\bibitem{lim2005optimal}
L.-H. Lim, ``Optimal solutions to non-negative parafac/multilinear nmf always
  exist,'' in \emph{Workshop on Tensor Decompositions and Applications, Centre
  International de rencontres Math{\'e}matiques, Luminy, France}, 2005.

\bibitem{Bro97-parafac}
\BIBentryALTinterwordspacing
R.~Bro, ``Parafac. tutorial and applications,'' \emph{Chemometrics and
  Intelligent Laboratory Systems}, vol.~38, no.~2, pp. 149--171, 1997.
  [Online]. Available:
  \url{https://www.sciencedirect.com/science/article/pii/S0169743997000324}
\BIBentrySTDinterwordspacing

\bibitem{ANDERSSON-Nway-toolbox}
\BIBentryALTinterwordspacing
C.~A. Andersson and R.~Bro, ``The n-way toolbox for matlab,''
  \emph{Chemometrics and Intelligent Laboratory Systems}, vol.~52, no.~1, pp.
  1--4, 2000. [Online]. Available:
  \url{https://www.sciencedirect.com/science/article/pii/S016974390000071X}
\BIBentrySTDinterwordspacing

\bibitem{TOMASI-parafac-missing-values}
\BIBentryALTinterwordspacing
G.~Tomasi and R.~Bro, ``Parafac and missing values,'' \emph{Chemometrics and
  Intelligent Laboratory Systems}, vol.~75, no.~2, pp. 163--180, 2005.
  [Online]. Available:
  \url{https://www.sciencedirect.com/science/article/pii/S0169743904001741}
\BIBentrySTDinterwordspacing

\bibitem{ACAR-tf-missing-data}
\BIBentryALTinterwordspacing
E.~Acar, D.~M. Dunlavy, T.~G. Kolda, and M.~M{\o}rup, ``Scalable tensor
  factorizations for incomplete data,'' \emph{Chemometrics and Intelligent
  Laboratory Systems}, vol. 106, no.~1, pp. 41--56, 2011, multiway and Multiset
  Data Analysis. [Online]. Available:
  \url{https://www.sciencedirect.com/science/article/pii/S0169743910001437}
\BIBentrySTDinterwordspacing

\bibitem{Xiong10-BPTF}
\BIBentryALTinterwordspacing
L.~Xiong, X.~Chen, T.-K. Huang, J.~Schneider, and J.~G. Carbonell,
  \emph{Temporal Collaborative Filtering with Bayesian Probabilistic Tensor
  Factorization}, 2010, pp. 211--222. [Online]. Available:
  \url{https://epubs.siam.org/doi/abs/10.1137/1.9781611972801.19}
\BIBentrySTDinterwordspacing

\bibitem{pmlr-v32-rai14}
\BIBentryALTinterwordspacing
P.~Rai, Y.~Wang, S.~Guo, G.~Chen, D.~Dunson, and L.~Carin, ``Scalable bayesian
  low-rank decomposition of incomplete multiway tensors,'' in \emph{Proceedings
  of the 31st International Conference on Machine Learning}, ser. Proceedings
  of Machine Learning Research, E.~P. Xing and T.~Jebara, Eds., vol.~32,
  no.~2.\hskip 1em plus 0.5em minus 0.4em\relax Bejing, China: Pmlr, 22--24 Jun
  2014, pp. 1800--1808. [Online]. Available:
  \url{https://proceedings.mlr.press/v32/rai14.html}
\BIBentrySTDinterwordspacing

\bibitem{Zhao16-bayesian-tensor}
\BIBentryALTinterwordspacing
Q.~Zhao, G.~Zhou, L.~Zhang, A.~Cichocki, and S.-I. Amari, ``Bayesian robust
  tensor factorization for incomplete multiway data,'' \emph{IEEE Trans. Neural
  Netw. Learn. Syst.}, vol.~27, no.~4, pp. 736--748, 2016. [Online]. Available:
  \url{https://doi.org/10.1109/TNNLS.2015.2423694}
\BIBentrySTDinterwordspacing

\bibitem{Song19-tensorcompletion}
\BIBentryALTinterwordspacing
Q.~Song, H.~Ge, J.~Caverlee, and X.~Hu, ``Tensor completion algorithms in big
  data analytics,'' \emph{ACM Trans. Knowl. Discov. Data}, vol.~13, no.~1, Jan.
  2019. [Online]. Available: \url{https://doi.org/10.1145/3278607}
\BIBentrySTDinterwordspacing

\bibitem{TIMMERMAN2002447}
\BIBentryALTinterwordspacing
M.~E. Timmerman and H.~A. Kiers, ``Three-way component analysis with smoothness
  constraints,'' \emph{Computational Statistics \& Data Analysis}, vol.~40,
  no.~3, pp. 447--470, 2002. [Online]. Available:
  \url{https://www.sciencedirect.com/science/article/pii/S0167947302000592}
\BIBentrySTDinterwordspacing

\bibitem{Reis02-parafac-spline}
\BIBentryALTinterwordspacing
M.~M. Reis and M.~M.~C. Ferreira, ``Parafac with splines: a case study,''
  \emph{Journal of Chemometrics}, vol.~16, no. 8‐10, pp. 444--450, 2002.
  [Online]. Available:
  \url{https://onlinelibrary.wiley.com/doi/abs/10.1002/cem.749}
\BIBentrySTDinterwordspacing

\bibitem{YOKOTA15-smoothCP}
\BIBentryALTinterwordspacing
T.~Yokota, R.~Zdunek, A.~Cichocki, and Y.~Yamashita, ``Smooth nonnegative
  matrix and tensor factorizations for robust multi-way data analysis,''
  \emph{Signal Processing}, vol. 113, pp. 234--249, 2015. [Online]. Available:
  \url{https://www.sciencedirect.com/science/article/pii/S0165168415000614}
\BIBentrySTDinterwordspacing

\bibitem{Yokota16-tensor-completion-smooth}
T.~Yokota and A.~Cichocki, ``Tensor completion via functional smooth component
  deflation,'' in \emph{2016 IEEE International Conference on Acoustics, Speech
  and Signal Processing (ICASSP)}, 2016, pp. 2514--2518.

\bibitem{Li17-low-rank-completion}
X.~Li, Y.~Ye, and X.~Xu, ``Low-rank tensor completion with total variation for
  visual data inpainting,'' in \emph{Proceedings of the Thirty-First AAAI
  Conference on Artificial Intelligence}, ser. Aaai'17.\hskip 1em plus 0.5em
  minus 0.4em\relax AAAI Press, 2017, p. 2210–2216.

\bibitem{Imaizumi17-tensor-smoothness}
\BIBentryALTinterwordspacing
M.~Imaizumi and K.~Hayashi, ``Tensor decomposition with smoothness,'' in
  \emph{Proceedings of the 34th International Conference on Machine Learning},
  ser. Proceedings of Machine Learning Research, D.~Precup and Y.~W. Teh, Eds.,
  vol.~70.\hskip 1em plus 0.5em minus 0.4em\relax International Convention
  Centre, Sydney, Australia: Pmlr, 06--11 Aug 2017, pp. 1597--1606. [Online].
  Available: \url{http://proceedings.mlr.press/v70/imaizumi17a.html}
\BIBentrySTDinterwordspacing

\bibitem{Sadowski18-imagecompletion}
T.~Sadowski and R.~Zdunek, ``Image completion with smooth nonnegative matrix
  factorization,'' in \emph{Artificial Intelligence and Soft Computing},
  L.~Rutkowski, R.~Scherer, M.~Korytkowski, W.~Pedrycz, R.~Tadeusiewicz, and
  J.~M. Zurada, Eds.\hskip 1em plus 0.5em minus 0.4em\relax Cham: Springer
  International Publishing, 2018, pp. 62--72.

\bibitem{durand21-SmoothNTF}
\BIBentryALTinterwordspacing
A.~Durand, F.~Roueff, J.-M. Jicquel, and N.~Paul, ``Smooth nonnegative tensor
  factorization for multi-sites electrical load monitoring,'' in
  \emph{{Eusipco}}, Dublin, Ireland, Aug. 2021. [Online]. Available:
  \url{https://hal.telecom-paris.fr/hal-03167498}
\BIBentrySTDinterwordspacing

\bibitem{Henriet19NILM}
U.~Henriet, S.and~\c{S}im\c{s}ekli, S.~Dos~Santos, B.~Fuentes, and G.~Richard,
  ``Independent-variation matrix factorization with application to energy
  disaggregation,'' \emph{IEEE Signal Processing Letters}, vol.~26, no.~11, pp.
  1643--1647, 2019.

\bibitem{gousseau01}
\BIBentryALTinterwordspacing
Y.~Gousseau and J.-M. Morel, ``Are natural images of bounded variation?''
  \emph{SIAM Journal on Mathematical Analysis}, vol.~33, no.~3, pp. 634--648,
  2001. [Online]. Available: \url{https://doi.org/10.1137/S0036141000371150}
\BIBentrySTDinterwordspacing

\bibitem{Yokota16-smoothCP-tensor-completion}
T.~Yokota, Q.~Zhao, and A.~Cichocki, ``Smooth parafac decomposition for tensor
  completion,'' \emph{IEEE Transactions on Signal Processing}, vol.~64, no.~20,
  pp. 5423--5436, 2016.

\bibitem{Zdunek14-splineNMF}
R.~Zdunek, A.~Cichocki, and T.~Yokota, ``B-spline smoothing of feature vectors
  in nonnegative matrix factorization,'' in \emph{Artificial Intelligence and
  Soft Computing}, L.~Rutkowski, M.~Korytkowski, R.~Scherer, R.~Tadeusiewicz,
  L.~A. Zadeh, and J.~M. Zurada, Eds.\hskip 1em plus 0.5em minus 0.4em\relax
  Cham: Springer International Publishing, 2014, pp. 72--81.

\bibitem{Amini17-functional-cp}
\BIBentryALTinterwordspacing
A.~A. Amini, E.~Levina, and K.~A. Shedden, ``Structured regression models for
  high-dimensional spatial spectroscopy data,'' \emph{Electronic Journal of
  Statistics}, vol.~11, no.~2, pp. 4151 -- 4178, 2017. [Online]. Available:
  \url{https://doi.org/10.1214/17-EJS1301}
\BIBentrySTDinterwordspacing

\bibitem{HAUTECOEUR2020256}
C.~Hautecoeur and F.~Glineur, ``Nonnegative matrix factorization over
  continuous signals using parametrizable functions,'' \emph{Neurocomputing},
  vol. 416, pp. 256--265, 2020.

\end{thebibliography}
\end{document}